\newtheorem{lemma}{Lemma}
\newtheorem*{lemma*}{Lemma}
\newtheorem{assumption}{Assumption}
\newcommand\norm[1]{\left\lVert#1\right\rVert}
\newtheorem{theorem}{Theorem}
\newtheorem{remark}{Remark}
\newtheorem{definition}{Definition}
\def\BibTeX{{\rm B\kern-.05em{\sc i\kern-.025em b}\kern-.08em
		T\kern-.1667em\lower.7ex\hbox{E}\kern-.125emX}}
\pgfplotsset{compat=1.7}
\newcommand{\myquotation}[1]{\begin{center}\textit{``#1''}\end{center}}
\newcommand{\myitem}[1]{\vspace{0.25\baselineskip}\noindent\textbf{#1}}
\begin{document}
	\title{Distributed no-regret edge resource allocation with limited communication}
	
	 \author{\large{Saad Kriouile, Dimitrios Tsilimantos, and Theodoros Giannakas}\\
	 \normalsize
	 Huawei Paris Research Centre,~France, first.last@huawei.com
	 
	\thanks{This work was supported by the CHIST-ERA LeadingEdge project, grant CHIST-ERA-18-SDCDN-004 through ANR grant number ANR-19-CHR3-0007-06. The three authors contributed equally; the order is random.}}
		
	\maketitle
	
	\begin{abstract}
To accommodate low latency and computation-intensive services, such as the Internet-of-Things (IoT), 5G networks are expected to have cloud and edge computing capabilities. To this end, we consider a generic network setup where devices, performing analytics-related tasks, can partially process a task and offload its remainder to base stations, which can then reroute it to cloud and/or to edge servers. 
To account for the potentially unpredictable traffic demands and edge network dynamics, we formulate the resource allocation as an online convex optimization problem with service violation constraints and allow limited communication between neighboring nodes. 
To address the problem, we propose an online distributed (across the nodes) primal-dual algorithm and prove that it achieves sublinear regret and violation; in fact, the achieved bound is of the same order as the best known centralized alternative. Our results are further supported using the publicly available Milano dataset.
	\end{abstract}
	
	\begin{IEEEkeywords}
		Online convex optimization, edge computing, resource allocation, distributed algorithms.
	\end{IEEEkeywords}
	
	\section{Introduction}
	\label{sec:intro}
	\subsection{Motivation}

It is envisioned that globally more than 29.3 billion networked devices will be connected to the Internet of Things (IoT) by 2023~\cite{cisco2020cisco}, offering automation and real-time monitoring of machine and human-driven processes.
A main challenge in IoT deployment lies with the massive amount of connected devices; and in particular with the device heterogeneity (e.g., different computational capabilities) and the diverse and potentially stringent service (task) requirements~\cite{chen2017online}.
To host the unprecedented IoT data traffic, the edge computing paradigm has recently gained a lot of momentum as complementary to that of the cloud and it has been deemed as a key enabler to what ultimately will be the so-called ``Cloud-to-Things Continuum''~\cite{chiang2016fog,chouayakh2022towards}.

In that framework, a plethora of spatially distributed devices collect data from sensors and perform low-latency impromptu computation (e.g., machine-learning inference) using energy-limited resources. The envisaged ``Cloud-to-Things Continuum'' allows flexible task offloading \emph{from} an IoT device, via base stations, \emph{towards} more computationally powerful edge servers and, if needed, to the cloud. 
Although this architecture is promising, allocating resources for IoT computations has two distinct fundamental challenges: (a) the resources are allocated in the presence of highly unpredictable and non-stationary request patterns (demands) and network conditions; (b) the network nodes handling those tasks, namely devices, base stations and servers, are distributed and should act in the absence of a centralized entity with full observability. Naturally, the following question arises:

\myquotation{Can we offer an efficient distributed data-driven algorithm for resource allocation in the IoT context?}

In order to address this question, in this paper we consider a distributed setting with nodes of different capabilities and employ Online Convex Optimization (OCO)~\cite{zinkevich2003online}. The use of OCO is suitable for problems that are difficult to model due to unpredictable dynamics and provides concrete theoretical guarantees for such problems even in the worst-case.

\subsection{Related Work}

The related literature can be split into two categories. 
The first corresponds to studies of \emph{IoT network optimization} in the edge-cloud setting, using similar system model and assumptions to ours. 
An offline version of the problem is formulated and then decomposed across its different domains (fog and cloud) in~\cite{deng2015towards}, resulting in convex subproblems. In~\cite{lee2017online}, the authors consider the latency minimization and develop an algorithm based on the online secretary framework---however, no constraints are used in their formulation. 
Closer to our work, \cite{chen2017online, chouayakh2022towards} formulate the resource allocation as an OCO problem, and model the service violations as long-term constraints.
The learning rate is adapted to the different IoT tasks in~\cite{chen2017online}; and further extending the notion of constraint violation (see~\cite{yuan2018online}), the number of violations is also considered in~\cite{chouayakh2022towards}. Unlike our work, both approaches are centralized.

The second one deals with works on \emph{OCO with constraints in generic settings}, such as ~\cite{mahdavi2012trading, neely2017online, liakopoulos2019cautious}. Although generic, these works do not apply to our problem as they develop centralized algorithms. 
Delayed feedback on the cost and constraint functions arrive to a centralized agent in~\cite{cao2020constrained}; our approach adopts a different feedback model based on limited exchange of information between nodes. 
Finally, a distributed OCO algorithm in an environment with time-varying constraints is presented and analyzed in~\cite{yi2020distributed}, but, unlike our approach, the nodes/actors use synchronous information and make consensus steps.

\subsection{Contributions and Structure}

In this work, we approach the resource allocation in an edge-cloud scenario in a distributed way; our main contributions can be summarized as follows.

\textbf{(C.1)} We model the resource allocation as a distributed constrained OCO problem. The network nodes (devices, base stations, edge servers) are cast as individual OCO agents that must collectively optimize a given network performance metric and  satisfy service requirement guarantees (modeled as long-term constraints). To this end, we define a limited communication model between neighboring agents that allows them to exchange crucial information, related to their coupling constraints. 

\textbf{(C.2)} We propose an online primal-dual algorithm, based on projected gradient descent, with a sub-linear regret bound $\mathcal{O}(T^{1/2})$ and a sub-linear constraint violation bound $\mathcal{O}(T^{3/4})$. These bounds are equivalent to a centralized approach besides a multiplicative factor. We validate our theoretical results with numerical simulations on a commonly used dataset, and compare the performance of our algorithm to benchmarks.

The remainder of this paper is organized as follows. We summarize our system model and assumptions in Section \ref{sec:problem-setup}. Then, we introduce the OCO formulation of the problem and present our proposed algorithm and its theoretical guarantees in Section \ref{sec:oco-formulation}. We show our numerical results in Section \ref{sec:perf} and conclude the paper in Section \ref{sec:conclusions}.

	\section{Problem Setup}
	\label{sec:problem-setup}

In this section we present our network model and main assumptions. In particular, we start by the edge computing components that are available in our IoT application; then we present the control (optimization) variables, and finally we discuss our performance objectives and system constraints. 

In what follows, we use bold fonts for vectors and matrices, and $\mathcal{A}$ for a set with $|\mathcal{A}| = A$ elements.

\subsection{Topology and Computational Requests}\label{subsec:network-components}

We consider a layer of IoT sensors, which receive computational requests (e.g., analytics tasks) that need to be executed, similarly to~\cite{chen2017online, chouayakh2022towards, souza2016handling, deng2015towards, yousefpour2017fog, lee2017online}. Time is slotted and at every timeslot $t$ those requests arrive to a set of devices $\mathcal{D}$. We denote the vector of requests as $\mathbf{r}^t\in\mathbb{R}_+^D$.
Before $\mathbf{r}^t$ is revealed, the Network Operator (NO) has to reserve resources across its network infrastructure in order to accommodate them. 

We assume that the network consists of the following nodes, as shown in Fig.~\ref{fig:topology}.

\begin{itemize}
\item Devices at the edge, denoted by $\mathcal{D}$;
\item Base Stations (BSs) at the edge, denoted by $\mathcal{B}$;
\item Servers at the edge, denoted by $\mathcal{S}$;
\item A cloud server at the core network, denoted by $C$.
\end{itemize}
Throughout the rest of this work, we will refer to the above entities (except for the cloud) as ``nodes'' or ``agents''. We denote by $\mathcal{N}$ the set of all nodes across the network with $N=D+B+S$.

Each device can process locally part of the computation and can also offload tasks via a wireless channel to the BSs.
Then, the BSs can forward an incoming task either to the edge servers (wirelessly) or to the cloud. Finally, each edge server can process the received tasks locally, reroute to other edge servers or forward to the cloud; the latter only executes tasks. For simplicity, we assume full connectivity between nodes, but our methodology applies to any connectivity graph.

\begin{figure}[t]
	\centering
	\includegraphics[width=0.55\columnwidth,trim={1cm 1cm 1.2cm 1.3cm},clip]{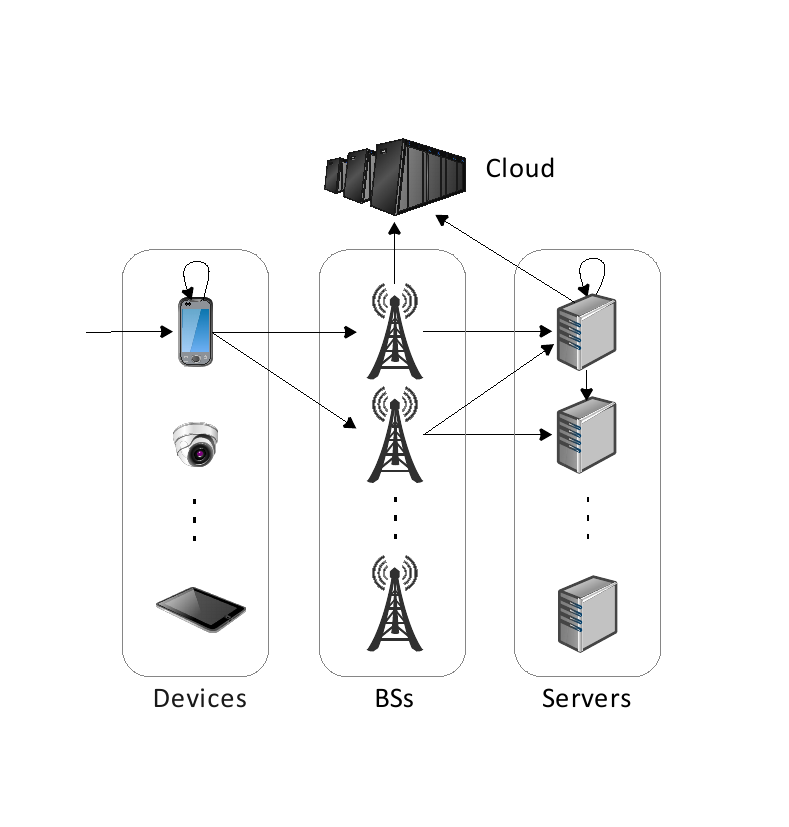}
	\caption{Topology of our edge computing setting}
	\label{fig:topology}
\end{figure}

\subsection{Distributed Control Variables}\label{subsec:control-variables}

The NO wishes to optimize a set of performance metrics in a \emph{distributed} manner.
Therefore, we wish to design a system where each agent decides its own actions. At every $t$, the control variables for every device $d\in\mathcal{D}$, BS $b\in\mathcal{B}$ and server $s\in\mathcal{S}$ are
\begin{align}\label{eq:device-var}
\mathbf{x}_d^t &= [w^t_{d0}, w^t_{d1}, \dots, w^t_{dB}, p^t_{d1}, \dots, p^t_{dB}] \in\mathbb{R}_+^{2B+1} \\
\mathbf{x}_b^t &= [y^t_{bC}, y^t_{b1}, \dots, y^t_{bS}, q^t_{b1}, \dots, q^t_{bS}]\in\mathbb{R}_+^{2S+1} \\
\mathbf{x}_s^t &= [z^t_{sC}, z^t_{s1}, \dots, z^t_{sS}] \in\mathbb{R}_+^{S+1}.
\end{align}

For each device $d$, $w^t_{d0}$ is the locally executed tasks and $w^t_{db}$, $p^t_{db}$ are the offloaded tasks and the respective transmission power to each available BS. For each BS $b$, $y_{b,C}^t$ denotes the tasks offloaded to the cloud and $y_{bs}^t, q_{bs}^t$ are the offloaded tasks and the respective transmission power to each available server $s\in\mathcal{S}$. Finally, for each server $s$, $z^t_{sC}$ is the offloaded tasks to the cloud and $z^t_{ss'}$ is the offloaded tasks to each available server $s'\in\mathcal{S}$, including $z^t_{ss}$ that denotes the locally processed tasks.

For every $t$, it must hold $\mathbf{x}_d^t\in\Omega_d, \mathbf{x}_b^t\in\Omega_b, \mathbf{x}_s^t\in\Omega_s$, where 
$\Omega_d, \Omega_b, \Omega_s$ are time-invariant box constraints of the form $\{\mathbf{x}: \mathbf{0}\le \mathbf{x} \le \mathbf{\bar{x}}\}$. Note that these constraints are local, meaning that an agent needs no external information in order to satisfy them. 
To denote the \emph{collective} control variable of all nodes, we use the following notation:
\begin{equation}
\mathbf{x}^t_{\mathcal{D}} = \{\mathbf{x}^t_d\}_{\forall d\in\mathcal{D}},~~
\mathbf{x}^t_{\mathcal{B}} = \{\mathbf{x}^t_b\}_{\forall b\in\mathcal{B}},~~
\mathbf{x}^t_{\mathcal{S}} = \{\mathbf{x}^t_s\}_{\forall s\in\mathcal{S}}.
\end{equation}
Finally, we use $\mathbf{x}^t = \{\mathbf{x}^t_{\mathcal{D}}, \mathbf{x}^t_{\mathcal{B}}, \mathbf{x}^t_{\mathcal{S}}\} \in \Omega \subset \mathbb{R}_+^V$ to denote all the variables across the network, with $V=D(2B+1)+B(2S+1)+S(S+1)$.

\subsection{Performance Objectives}

Our main target is to choose the resources $\mathbf{x}^t$, such that a cumulative cost for the total delay and transmit power across all nodes is minimized. 
The cost function at each node depends on its control variables and it is considered time-varying  in order to capture the unpredictable network dynamics at every timeslot, e.g. the randomness of the wireless channels or the network congestion levels. More precisely,  we have the following cost functions per node:
\begin{align}
	f_{d}^t(\mathbf{x}_{d}^{t})&=c^t_{d}(w_{d0}^t) + \sum_{b \in B} c^t_{db}(w_{db}^t,p_{db}^t)  \\
	f_{b}^t(\mathbf{x}_{b}^{t})&=  \sum_{s \in S} c^t_{bs}(y_{bs}^t,q_{bs}^t)+c^t_{bC}(y_{bC}^t) \\
	f^t_{s}(\mathbf{x}_{s}^{t})&= \sum_{s' \in S} c^t_{ss'}(z_{ss'}^t) + c^t_{sC}(z_{sC}^t).  
\end{align}

The functions $c^t_{d}(.), c^t_{ss}(.)$ represent local processing delay cost, $c^t_{db}(.), c^t_{bs}(.)$ capture both delay and power cost for the wireless links between nodes, $c^t_{bC}(.), c^t_{sC}(.)$ are used for the offloading delay cost to the cloud and $c^t_{ss'}(.)$ introduces the delay cost for the wired links between servers. 
At every timeslot $t$, the total cost is expressed as
\begin{equation}
f^t(\mathbf{x}^t) = \sum_{d\in\mathcal{D}} f^t(\mathbf{x}^t_d) + \sum_{b\in\mathcal{B}} f^t(\mathbf{x}^t_b) + \sum_{s\in\mathcal{S}} f^t(\mathbf{x}^t_s).
\end{equation}

Similar to our model, most related works, e.g.,~\cite{chen2017online, chouayakh2022towards, lee2017online}, assume that local processing delay and communication delay, often specified via standard queuing models, are expressed by functions of only local control variables. As we will see next, this is not the case for the problem constraints, where there exist constraints that couple agents to preserve the flow of tasks inside the network.

\subsection{Constraints}

We now focus on the constraints imposed by our application. In practice, a good decision must first ensure that the incoming tasks are either processed locally or offloaded to other nodes (\emph{flow conservation constraint}). Moreover, the transmission data rate of a wireless link must be sufficient for the assigned offloaded tasks (\emph{rate constraint}). We model these rates using the well known Shannon capacity. All constraint functions are considered time-varying in order to model the unknown dynamics of incoming tasks and channel gains. Given that the agents first reserve resources and then the tasks are revealed, it is possible to have service violations, i.e. the provisioned resources are not adequate or cannot be realized.

For each device $d \in \mathcal{D}$, the constraint functions are
\begin{align} 
	g^t_{d0}(\mathbf{x}_{d}^{t})& = r^t_{d} - w^t_{d0} - \sum_{b \in \mathcal{B}} w^t_{db}\label{eq:flow-con-dev} \\
	g_{db}^t(\mathbf{x}_{d}^{t})& = w^t_{db} - b_w\log_2\big(1 + \alpha_{db}^t p_{db}^t), \forall~b\in \mathcal{B},
\end{align}
where $b_w$ is a constant for the transmission bandwidth and $\alpha_{db}^t$ is an unknown variable for the channel gain, including the effect of interference and noise. 
In a similar way, for each BS $b \in \mathcal{B}$ we have 
\begin{align}
g^t_{b0}(\mathbf{x}_{b}^{t}; \mathbf{x}_{\mathcal{D}}^{t})& =\sum_{d \in \mathcal{D}} w^t_{db} - y^t_{bC} - \sum_{s\in\mathcal{S}} y^t_{bs}\label{eq:flow-con-bs}\\
g_{bs}^t(\mathbf{x}_{b}^{t})& = y^t_{bs} - b_w\log_2\big(1 + \alpha_{bs}^t q_{bs}^t), \forall~s\in \mathcal{S},
\end{align}
where $\alpha_{bs}^t$ is defined as $\alpha_{db}^t$ above. 
Notice that \eqref{eq:flow-con-bs} is a \emph{coupling constraint}, i.e. to evaluate the function at BS $b$, we need to know the external variables $\{w^t_{db}\}_{d\in\mathcal{D}}$ of devices. Conventionally, we denote this dependency with conditional arguments to distinguish from locally available variables, as denoted by $g^t_{b0}(\mathbf{x}_{b}^{t}; \mathbf{x}_{\mathcal{D}}^{t})$.
Finally, for each server $s\in \mathcal{S}$, we have only the following flow conservation constraint function
\begin{align}
g_{s0}^t(\mathbf{x}_{s}^{t}; \mathbf{x}_{\mathcal{B}}^{t}, \mathbf{x}_{\mathcal{S}_{-s}}^{t}) \! = \! \sum_{b\in\mathcal{B}} y^t_{bs} + \!\!\! \sum_{s' \in \mathcal{S}_{-s}}\!\!\! z^t_{s's} - \!\! \sum_{s'\in\mathcal{S}}\! z^t_{ss'} \! - \! z^t_{sC} \label{eq:flow-con-server}
\end{align}
where $\mathcal{S}_{-s}$ is used to denote the set of edge servers $\mathcal{S}$ excluding $s$.
\eqref{eq:flow-con-server} is also a coupling constraint, since server $s$ needs to know the external variables $\{y^t_{bs}\}_{b\in\mathcal{B}}$ of BSs and $\{z^t_{s's}\}_{s'\in\mathcal{S}_{-s}}$ of other servers.

To denote the collective set of constraints per device $d$ and per BS $b$, we use the notation
\begin{align}
	\mathbf{g}_d^t(\mathbf{x}_{d}^{t}) &=  g^t_{d0}(\mathbf{x}_{d}^{t}) \cup  \{g^t_{db}(\mathbf{x}_{d}^{t})\}_{\forall b\in\mathcal{B}} \\
	\mathbf{g}_b^t(\mathbf{x}_{b}^{t};\mathbf{x}_{\mathcal{D}}^{t}) &=  g^t_{b0}(\mathbf{x}_{b}^{t};\mathbf{x}_{\mathcal{D}}^{t}) \cup  \{g^t_{bs}(\mathbf{x}_{b}^{t})\}_{\forall s\in\mathcal{S}} \label{eq:constraint_g_last}
\end{align}
and write $\mathbf{g}^t(\mathbf{x}^t) = \lbrace \{ \mathbf{g}^t_{d}\}_{d\in \mathcal{D}}, \{\mathbf{g}^t_{b}\}_{d\in \mathcal{B}}, \{g^t_{s0}\}_{s\in \mathcal{S}} \rbrace: \mathbb{R}_+^{V} \to \mathbb{R}^{M}$ to denote all constraints across the network, where $M=B(D+S)+N$.

\subsection{Exchange of Information}

Due to the coupling constraints, a fully distributed solution is no longer possible. To circumvent this, we allow each node to exchange information with other nodes, so that it can take into account the coupling constraints in its local decisions. 
The exact messages to exchange is part of the algorithm design, which ideally should achieve a performance close to a centralized approach with a minimal exchange of information. To limit the communication overhead, we further consider that each node can send information to other nodes only once during a timeslot, i.e. at the beginning of a timeslot. As we will see in the next section, this practical assumption introduces delayed feedback between nodes, as not all of the required information is available for a single transmission step within a timeslot.

	\section{OCO Formulation and Decentralized Algorithm}
	\label{sec:oco-formulation}
	The goal of this section is to formulate the resource allocation, described in the previous section, as an Online Convex Optimization (OCO) problem~\cite{zinkevich2003online} and propose an algorithm to solve it. Importantly, we explain how we adapt a centralized algorithm to transcend towards one which can run in a distributed fashion; finally, we present the performance guarantees of our solution.

\subsection{OCO Preliminaries}\label{subsec:oco-form}

We start with the basics of OCO in a centralized algorithm for two reasons: (a) it helps to introduce useful concepts and metrics; and (b) we benchmark our distributed algorithm with respect to a centralized one.
To formulate the resource allocation as an OCO problem, we first need to define the sequence of events taking place for the central agent during every timeslot $t$. 
\begin{enumerate}
	\item The agent implements an action $\mathbf{x}^t$. 
	\item The environment reveals all the unknown variables, e.g. computation requests $\mathbf{r}^t$ and channel gains $\alpha_{ij} ^t$, which in the OCO framework can be random variables or even controlled by an adversary. Using these, the functions $f^t(\mathbf{.})$ and $\mathbf{g}^t(\mathbf{.})$ become known. 
	\item The agent receives or evaluates cost and constraint violations, i.e. the values  $f^t(\mathbf{x}^t)$ and $\mathbf{g}^t(\mathbf{x}^t)$. 
	\item The agent updates its action to $\mathbf{x}^{t+1}$.
\end{enumerate}

Below, we define the benchmark actions and the metrics to evaluate an algorithm that produces a sequence of actions $\{\mathbf{x}^t\}_{t=1,\dots,T}$.
\begin{definition}[Static Regret]
The fixed optimal action $\mathbf{x}_{*}$ and the static regret are defined as
	\begin{gather}
	\mathbf{x}_{*} = \underset{x\in\Omega}{\mathrm{argmin}} \sum_{t=1}^T f^t(\mathbf{x}), \emph{s.t.}~\mathbf{g}^t(\mathbf{x}) \le 0,~t=1,\dots,T \label{eq:static-bench}\\
	\emph{Reg}_{S}(T) = \sum_{t=1}^T f^t(\mathbf{x}^t) - \sum_{t=1}^T f^t(\mathbf{x}_*).
\end{gather}
\end{definition}
\begin{definition}[Dynamic Regret]
The per slot optimal action  $\{\mathbf{x}_{*}^t\}_{t=1,\dots,T}$ and the dynamic regret are defined as 
\begin{gather}
	\mathbf{x}_{*}^t = \underset{x\in\Omega}{\mathrm{argmin}} f^t(\mathbf{x}), \emph{s.t.}~\mathbf{g}^t(\mathbf{x}) \le 0 \label{eq:dynamic-bench}\\
	\emph{Reg}_{D}(T) = \sum_{t=1}^T f^t(\mathbf{x}^t) - \sum_{t=1}^T f^t(\mathbf{x}_*^t).
\end{gather}
\end{definition}
\begin{definition}[Fit]
	Using the clipped constraint function $h_m^t(\mathbf{x}^t) := [g_m^t(\mathbf{x}^t)]^+=\max\{0,g_m^t(\mathbf{x}^t)\}$, the fit 
	is defined as
	\begin{equation}
		\emph{Fit}(T) = \sum_{t=1}^T \sum_{m=1}^M h_m^t(\mathbf{x}^t).
	\end{equation}
\end{definition}

The static regret is a standard metric for evaluating OCO-based algorithms; however, there is also a growing interest recently for the dynamic regret~\cite{jadbabaie2015online}, which is in principle a much harder metric. 
For the fit we use a clipped version of the constraint, i.e. we do not allow negative and positive values of the constraints to  average out in the long run.
This is a suitable modeling approach, as it would be unrealistic to assume that overprovisioning in certain timeslots can compensate for missing resources or channel rate violations in other timeslots. 
Finally, similarly to~\cite{yuan2018online}, we define the gradient of $h_m(\mathbf{x})$ as
\begin{align}
	\nabla h_m(\mathbf{x})=\nabla[g_m(\mathbf{x})]^+=
	\begin{cases}
		\mathbf{0} & \text{if $g_m(\mathbf{x}) \le 0$} \\
		\nabla g_m(\mathbf{x}) & \text{if $g_m(\mathbf{x}) > 0$}
	\end{cases}\nonumber
\end{align}
In the remainder, we use $\mathbf{h}^t$ for our analysis with subscripts that have the same meaning as for $\mathbf{g}^t$ in \eqref{eq:flow-con-dev}-\eqref{eq:constraint_g_last}.

Overall, the objective of an algorithm is to establish that Reg$_{S}(T)$, Reg$_{D}(T)$ and Fit$(T)$ are all sublinear in the time horizon $T$ \cite{zinkevich2003online}.

\subsection{Decomposition Across the Agents}

A centralized algorithm can use the Lagrange function 
\begin{equation}
	\mathcal{L}^t (\mathbf{x}^t, \boldsymbol{\lambda}^t) = f^t(\mathbf{x}^t) + \sum_{m=1}^M h_m^t(\mathbf{x}) \lambda^t_{m}
\end{equation}
and apply the primal-dual updates~\cite{yuan2018online} as follows
\begin{align}\label{eq:primal-dual-central}
	\boldsymbol{\lambda}^{t}=\frac{\mathbf{h}^{t}({\mathbf{x}}^{t})}{\eta \sigma},~  
	\mathbf{x}^{t+1}=\mathcal{P}_{\Omega}\left({\mathbf{x}}^t -\eta {\nabla}_{\mathbf{x}}\mathcal{L}^t({\mathbf{x}}^t, \boldsymbol{\lambda}^t) \right),
\end{align} 
where $\lambda^t_{m}$ is the Lagrange multiplier for constraint $m$, $\eta$ is the gradient step size, $\sigma$ is a constant and $\mathcal{P}_{\Omega}(\mathbf{x})$ is the projection of $\mathbf{x}$ onto $\Omega$.
Ideally, we want to perform these updates in a distributed way so that they are as close as possible to the updates of the centralized algorithm. For any node $n \in \mathcal{N}$ (device, BS or server), we have  
\begin{align}\label{eq:dual-node}
	\boldsymbol{\lambda}^{t}_{n} &= \frac{\mathbf{h}_{n}^{t}(\mathbf{x}_{n}^{t};\mathbf{x}_{\mathcal{E}_n}^{t})}{\eta \sigma}  \\
	\mathbf{x}_{n}^{t+1}&=\mathcal{P}_{\Omega_n} \left(\mathbf{x}_{n}^{t} - \eta \nabla_{\mathbf{x}_{n}}\mathcal{L}^t(\mathbf{x}^{t}, \boldsymbol{\lambda}^{t}) \right), \label{eq:primal-node}
\end{align}
where $\mathcal{E}_n$ is the set of nodes with variables required for node $n$, i.e. $\mathcal{E}_d = \emptyset$, $\mathcal{E}_b = \mathcal{D}$ \eqref{eq:flow-con-bs}, $\mathcal{E}_s = \mathcal{B}\cup \mathcal{S}_{-s}$ \eqref{eq:flow-con-server}, and $\boldsymbol{\lambda}^{t}$ uses the same subscripts for constraints as $\mathbf{g}^t,\mathbf{h}^t$.

We now focus on the gradients in \eqref{eq:primal-node} and write
\begin{equation}\label{eq:gradient-local_external}
	\nabla_{\mathbf{x}_{n}} \mathcal{L}^t(\mathbf{x}^{t}, \boldsymbol{\lambda}^{t}) = \mathbf{H}^t_{nL} + \mathbf{H}^t_{nE}
\end{equation}
where $\mathbf{H}^t_{nL}$ denotes the part that depends only on local cost and constraint functions at node $n$,
\begin{equation} \label{eq:gradient-node_local}
	\mathbf{H}^t_{nL} = \nabla_{\mathbf{x}_n} f_{n}^t(\mathbf{x}_{n}^{t}) + \boldsymbol{\lambda}_{n}^{t\top} \nabla_{\mathbf{x}_{n}} \mathbf{h}^t_n(\mathbf{x}_n^t;\mathbf{x}_{\mathcal{E}_n}^{t}),
\end{equation}
and $\mathbf{H}^t_{nE}$ describes the part that depends on external constraints from other nodes. For clarity, we provide the explicit expression for each type of node:
\begin{align} \label{eq:gradient-device_external}
	\mathbf{H}^t_{dE} &= \sum_{b \in \mathcal{B}}\lambda^{t}_{b0} \nabla_{\mathbf{x}_d} h^{t}_{b0}(\mathbf{x}_{b}^{t};\mathbf{x}_{\mathcal{D}}^{t})\\
	\label{eq:gradient-bs_external}
	\mathbf{H}^t_{bE} &= \sum_{s\in \mathcal{S}}\lambda_{s0}^{t} \nabla_{\mathbf{x}_b} h^{t}_{s0}(\mathbf{x}_{s}^{t}; \mathbf{x}_{\mathcal{B}}^{t},\mathbf{x}_{\mathcal{S}_{-s}}^{t})\\
	\label{eq:gradient-server_external}
	\mathbf{H}^t_{sE} &=\sum_{s'\in\mathcal{S}_{-s}} \!\!\lambda^{t}_{s'0}\nabla_{\mathbf{x}_s} h_{s'0}^{t}(\mathbf{x}_{s'}^{t}; \mathbf{x}_{\mathcal{B}}^{t},\mathbf{x}_{\mathcal{S}_{-s'}}^{t}).
\end{align}

Notice that in \eqref{eq:gradient-device_external}-\eqref{eq:gradient-server_external}, only one dimension of the gradients can be non-zero, as each node has only a single variable involved in coupling constraints of another node, i.e. $w_{db}, y_{bs}$ and $z_{ss'}$ for devices, BSs and servers.

\begin{remark}
According to the definition of $\mathbf{h}^t(.)$ and the flow constraints, the summation terms in  \eqref{eq:gradient-device_external}-\eqref{eq:gradient-server_external} simplify to $\lambda^{t}_{n0}  \mathbf{1}_{h^{t}_{n0}(\mathbf{x}_{n}^{t};\mathbf{x}_{\mathcal{E}_n}^{t})>0}$, where $\mathbf{1}_{(.)}$ is the indicator function.
\end{remark}

\subsection{Distributed Algorithm}
As one can already notice, a distributed version of the algorithm requires exchange of information at two different steps. Specifically, a node $n$ needs to send its variables to nodes that need them in their coupling constraints and then, receive the gradient-related feedback $\mathbf{H}^t_{nE}$. We now turn our focus on the adopted communication model and examine how this message exchange can be implemented. 

\begin{figure}[!b]
	\centering{\subfigure[Ideal case]{\includegraphics[width=0.48\columnwidth]{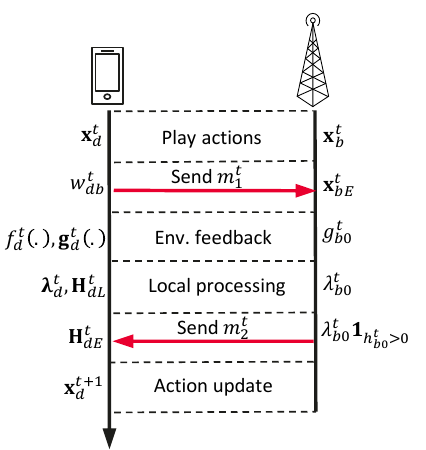}
			\label{fig:algo_ideal}}
		\hfil
		\subfigure[Proposed approach]{\includegraphics[width=0.48\columnwidth]{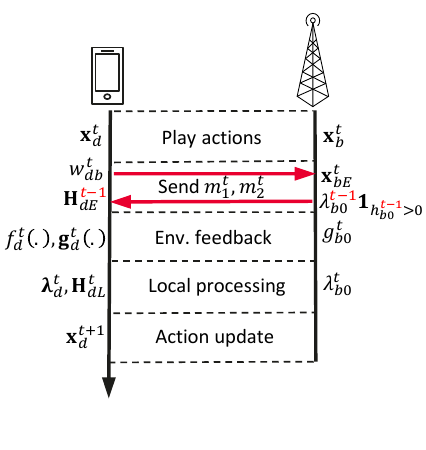}
			\label{fig:algo_real}}}
	\caption{Exchanged messages during time slot $t$ between device $d$ and BS $b$.}
	\label{fig_sim}
\end{figure}

An ideal scenario is shown in Fig.~\ref{fig:algo_ideal}, where nodes can send messages at any moment during a time slot, focusing for simplicity in the link between a device and a BS. As we can see, the BS $b$ needs to collect $\mathbf{x}^t_{\mathcal{E}_b}=\{w^t_{db}\}_{d \in \mathcal{D}}$ and perform few processing steps before it can send its feedback to the device, which can then perform the primal update $\mathbf{x}_{d}^{t+1}$. In practice, this is not possible in our model as nodes are allowed to send messages only once at the beginning of a time slot. We address this limitation by allowing nodes to send their feedback at the next time slot, as shown in Fig.~\ref{fig:algo_real}. As a result, the device uses the outdated feedback $\mathbf{H}^{t-1}_{dE}$ for its updates.

Overall, we define the following messages between nodes $n,v$, where $n \in\mathcal{E}_v$, and summarize them in Table \ref{table:messages}.

\begin{enumerate}
	\item $m^t_{1,n \to v}:=\mathbf{x}^t_{n} \in \mathbf{x}^t_{\mathcal{E}_v}$; required to evaluate the coupling constraint $h^t_{v0}(\mathbf{x}^t_v;\mathbf{x}^t_{\mathcal{E}_v})$ at node $v$, which is then used for the dual update of $\lambda^t_{v0}$ in \eqref{eq:dual-node} and the local term $\mathbf{H}^t_{vL}$.
	\item $m^t_{2,v \to n}:=\lambda^{t-1}_{v0} \mathbf{1}_{ h^{t-1}_{v0}(\mathbf{x}_{v}^{t-1};\mathbf{x}_{\mathcal{E}_{v}}^{t-1})>0}$; feedback required to evaluate $\mathbf{H}^{t-1}_{nE}$ for the primal update at node $n$. 
\end{enumerate}

\begin{table}[!t]
	\renewcommand{\arraystretch}{1.5}
	\caption{Messages  $(m^t_1,m^t_2)$ between nodes}
	\label{table:messages}
	\centering 
	\begin{tabular}{c  c  c  c  }
		\hline
		From$\backslash$To & \textbf{Device} $d'$  & \textbf{BS} $b'$  & \textbf{Server} $s'$  \\
		\hline 
		\textbf{Device} $d$  & $-$ & $w^t_{db'}$ & $-$ \\
		\textbf{BS} $b$ & $\lambda^{t-1}_{b0} \mathbf{1}_{h^{t-1}_{b0}>0}$ & $-$ & $y^t_{bs'}$ \\
		\textbf{Server} $s$ & $-$ & $\lambda^{t-1}_{s0} \mathbf{1}_{h^{t-1}_{s0}>0}$ & $z^t_{ss'}$, $ \lambda^{t-1}_{s0} \mathbf{1}_{h^{t-1}_{s0}>0} $ \vspace{1pt} \\ 
		\hline
	\end{tabular}
\end{table}

Let us now consider the primal/dual updates of the proposed approach. First, the dual update \eqref{eq:dual-node} remains the same and thus, identical to the centralized case. Then, for the primal update \eqref{eq:primal-node}, only the gradient term is modified and approximated by  
\begin{equation}\label{eq:gradient-local_external_approximation}
	\nabla_{\mathbf{x}_n} \hat{\mathcal{L}}(\mathbf{x}^t, \boldsymbol{\lambda}^t) = \mathbf{H}^t_{nL} + \mathbf{H}^{t-1}_{nE}.
\end{equation}

The steps of our proposed distributed algorithm are presented in Algorithm~\ref{algorithm} for any node $n$.  

\begin{algorithm}[t]
	\caption{Distributed OCO for node $n$}
	\label{algorithm}
	\begin{algorithmic}[1]
		\State Initialize: parameters $\sigma$, $\eta$
		\State Set first action $\mathbf{x}^{1}_{n}\in\Omega_n$ and define $\boldsymbol{\lambda}^{0}_{n}=\mathbf{0}$.
		\For{$t = 1, \dots,T$} 
		\State Play action $\boldsymbol{x}^{t}_{n}$
		\State Send messages $m^t_{1, n\to v}$ to nodes $v:n \in\mathcal{E}_v$
		\State Receive from environment functions $f^{t}_{n}(.)$ and $\boldsymbol{g}^{t}_{n}(.)$
		\State Receive feedback messages $m^t_{2, v\to n}$ from nodes $v$
		\State Compute $\boldsymbol{\lambda}_{n}^{t}$ with \eqref{eq:dual-node} \Comment Dual update 
		\State Update $\boldsymbol{x}_{n}^{t+1}$ with \eqref{eq:primal-node},\eqref{eq:gradient-local_external_approximation} \Comment Primal update 
		\EndFor
	\end{algorithmic}
\end{algorithm}

\subsection{Performance Guarantees}

We make the following standard assumptions, widely used in online learning literature (e.g., see~\cite{yuan2018online, liakopoulos2019cautious}), and then formally state our main theorem.
\begin{assumption}\label{ass:oco}\leavevmode

\begin{itemize}
\item (i) Set $\Omega_n$ is bounded and convex; specifically it holds that $\norm{\mathbf{x}_n - \mathbf{y}_n} \le R$, $\forall \mathbf{x}_n, \mathbf{y}_n \in\Omega_n$, for $n\in\mathcal{D,B,S}$.
\item (ii) For $t=1,\dots, T$, functions $f^t_n$ and $g^t_{n,i}$ are convex and Lipschitz with $\norm{\nabla_{\mathbf{x}_n} f^t_n}\le F^{\prime}$ and $\norm{\nabla_{\mathbf{x}_n} g_{n,i}^t}\le G^{\prime}$, for $n\in\mathcal{D,B,S}$ and $\mathbf{g}_n^t= \{g^t_{n,i}\}_{i=1,\dots,M_n}$ (with $M_n$ the number of constraints at node $n$).
\end{itemize}
\end{assumption}

Below we write a list of implications that we use for the proof of our theorem.
First, $f^t_n$ and $g^t_{n, i}$ are both bounded, i.e., $|f^t_n|\le F$, $|g^t_{n, i}| \le G^{\prime\prime}$. 
Second, since $\norm{\nabla g_{n,i}^t}\le G^{\prime}$, then $\norm{\nabla h_{n,i}^t}\le G^{\prime}$ (comes from the definition of gradient of $h$). Third, since $g^t_{n, i}$ is bounded, $h^t_{n, i}$ is also bounded by definition; hence $|h^t_{n,i}| \le G^{\prime\prime}$ and $\norm{\mathbf{h}^t}\le G$.
For simplicity we write that $|f^t_n|, \norm{\nabla f^t_n}\le F$ and $|h_{n,i}^t|, \norm{\nabla h_{n,i}^t}, \norm{\mathbf{h}_{n}^t} \le G$.
Fourth, since $g_{n,i}^t$ is convex, then $h_{n,i}^t$ is as well. 

A proof of the first and fourth implications can be found in Appendix \ref{app:implications}.

\begin{theorem}\label{theorem}
Given Assumption~\ref{ass:oco}, and $\sigma > 3KG^2$, Algorithm~\ref{algorithm} guarantees that
\begin{align}
\normalfont{\text{Reg}_S}(T) &\le \frac{R^2}{2 \eta} + \frac{2R E G^2}{\eta \sigma} + \frac{7}{2}\eta N F^2 T \triangleq U_{sr}, \\
\normalfont{\text{Reg}_D}(T) &\le U_{sr} + \frac{R}{\eta} V(\boldsymbol{x}_*^{1:T}) , \\
\normalfont{\text{Fit}}(T) &\le \sqrt{\frac{\eta \sigma }{\beta}MT(U_{sr} + 2 N F T )},
\end{align}
where $E$ is the number of edges in the network topology, $K=2D(3B+1) + 2B(3S+1) + 2S(2S-1)$, $\beta=1-\frac{3K G^2}{\sigma}$ and $V(\mathbf{x}_*^{1:T}) = \sum_{t=1}^{T}\norm{\mathbf{x}_*^t - \mathbf{x}_*^{t-1}}$. 
\end{theorem}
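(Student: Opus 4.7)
The plan is to adapt the centralized primal-dual OCO analysis (along the lines of \cite{yuan2018online, liakopoulos2019cautious}) while carefully tracking the error incurred by substituting the stale external feedback $\mathbf{H}^{t-1}_{nE}$ for the ideal $\mathbf{H}^t_{nE}$ in \eqref{eq:gradient-local_external_approximation}. Concretely, I would first apply nonexpansiveness of the Euclidean projection at each node $n$: for any reference $\mathbf{x}_n\in\Omega_n$,
\[
\norm{\mathbf{x}_n^{t+1}-\mathbf{x}_n}^2 \le \norm{\mathbf{x}_n^{t}-\mathbf{x}_n}^2 -2\eta\langle \nabla_{\mathbf{x}_n}\hat{\mathcal{L}}^t,\,\mathbf{x}_n^t-\mathbf{x}_n\rangle +\eta^2\norm{\nabla_{\mathbf{x}_n}\hat{\mathcal{L}}^t}^2.
\]
Summing over $n\in\mathcal{N}$, the stacked inner product reconstructs $\langle\nabla_{\mathbf{x}}\mathcal{L}^t,\mathbf{x}^t-\mathbf{x}\rangle$ up to the delay error $\sum_n\langle \mathbf{H}^{t-1}_{nE}-\mathbf{H}^t_{nE},\mathbf{x}^t_n-\mathbf{x}_n\rangle$, so the distributed primal step mimics the centralized one with a controlled perturbation.

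Next, using convexity of $f^t$ and of each $h_m^t$ (implication four in the list after Assumption~\ref{ass:oco}), I bound $f^t(\mathbf{x}^t)-f^t(\mathbf{x})+\boldsymbol{\lambda}^{t\top}[\mathbf{h}^t(\mathbf{x}^t)-\mathbf{h}^t(\mathbf{x})]\le \langle \nabla_{\mathbf{x}}\mathcal{L}^t,\mathbf{x}^t-\mathbf{x}\rangle$. I then bound $\norm{\nabla_{\mathbf{x}_n}\hat{\mathcal{L}}^t}^2$ by expanding and applying Young's inequality to split the cross-terms involving $\boldsymbol{\lambda}^t$; the combinatorial factor $K=2D(3B+1)+2B(3S+1)+2S(2S-1)$ arises from counting how many constraint gradients touch each primal coordinate, and using $\norm{\nabla h^t_{n,i}}\le G$ yields an upper bound of the form $\alpha_1 NF^2 + \alpha_2 K G^2\norm{\boldsymbol{\lambda}^t}^2$. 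Substituting the dual update $\boldsymbol{\lambda}^t=\mathbf{h}^t/(\eta\sigma)$, the $\boldsymbol{\lambda}^t$-quadratic becomes a $\norm{\mathbf{h}^t}^2$ term, which I must absorb into the $\boldsymbol{\lambda}^{t\top}\mathbf{h}^t(\mathbf{x}^t)=\norm{\mathbf{h}^t}^2/(\eta\sigma)$ quantity on the left; this is precisely where the condition $\sigma>3KG^2$ (equivalently $\beta>0$) is needed so that a strictly positive fraction $\beta$ of the $\norm{\mathbf{h}^t}^2$ mass survives. Telescoping the projection inequality over $t=1,\dots,T$ with $\norm{\mathbf{x}^1-\mathbf{x}}^2\le NR^2$ then delivers the master inequality from which all three bounds are read off.

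For the static regret, I set $\mathbf{x}=\mathbf{x}_*$; since $\mathbf{h}^t(\mathbf{x}_*)=0$, the dual contribution $\sum_t\boldsymbol{\lambda}^{t\top}\mathbf{h}^t(\mathbf{x}_*)$ vanishes and dropping $\beta\sum_t\norm{\mathbf{h}^t}^2/(\eta\sigma)\ge0$ yields $U_{sr}$, with the three summands corresponding respectively to the initial gap $R^2/(2\eta)$, the accumulated delay error (scaling in the number of network edges $E$ and in $G^2/(\eta\sigma)$), and the stepsize penalty $\tfrac{7}{2}\eta NF^2T$ from the gradient-norm bound. For the dynamic regret, I replace the telescoping of $\norm{\mathbf{x}^t-\mathbf{x}_*}^2$ by the standard trick of \cite{jadbabaie2015online}: the cross term no longer telescopes cleanly, and inserting $\pm\mathbf{x}_*^{t-1}$ plus $\norm{\mathbf{x}^t-\mathbf{x}_*^t}\le R$ produces exactly the additional $R V(\mathbf{x}_*^{1:T})/\eta$ term. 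For the fit, I lower-bound the LHS of the master inequality using $\text{Reg}_S(T)\ge-2NFT$ (from $|f^t_n|\le F$) to obtain $\beta\sum_t\norm{\mathbf{h}^t(\mathbf{x}^t)}^2\le \eta\sigma(U_{sr}+2NFT)$, and then apply Cauchy--Schwarz on the nonnegative clipped components across the $M$ constraints and $T$ slots to get the claimed $\sqrt{\eta\sigma MT(U_{sr}+2NFT)/\beta}$ bound.

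The main obstacle is handling the delay error $\sum_{t}\sum_n\langle\mathbf{H}^{t-1}_{nE}-\mathbf{H}^t_{nE},\mathbf{x}^t_n-\mathbf{x}_n\rangle$ cleanly: naively one expects this to scale with $T$ in a way that spoils sublinearity. The structural fact that rescues us, stated in Remark~1, is that each external gradient term reduces to $\lambda^{t}_{v0}\mathbf{1}_{\{h^t_{v0}>0\}}$ times a constant-norm direction, so each edge contributes at most $R\cdot \lambda^{t-1}_{v0}$ in magnitude. Summing over the $E$ inter-node edges and bounding $\lambda^{t-1}_{v0}\le G/(\eta\sigma)$ converts the delay error into an $O(REG^2/(\eta\sigma))$ constant rather than a term growing with $T$, which is the last ingredient needed for sublinearity.
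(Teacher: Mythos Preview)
Your overall skeleton---projection nonexpansiveness at each node, convexity of $\mathcal{L}^t$, substituting $\boldsymbol{\lambda}^t=\mathbf{h}^t/(\eta\sigma)$, absorbing the quadratic multiplier terms via $\sigma>3KG^2$, and the read-offs for static regret, dynamic regret, and fit---matches the paper's proof. The genuine gap is in your last paragraph, the handling of the delay error
\[
\sum_{t=1}^{T}\sum_{n}\big\langle \mathbf{H}^{t-1}_{nE}-\mathbf{H}^{t}_{nE},\;\mathbf{x}^t_n-\mathbf{x}_n\big\rangle .
\]
Bounding each summand edge-by-edge by $R\cdot G\cdot \lambda^{t-1}_{v0}\le RG^2/(\eta\sigma)$, as you describe, yields an $O\!\big(T\cdot REG^2/(\eta\sigma)\big)$ total, not a constant; with $\eta=\Theta(T^{-1/2})$ this is $\Theta(T^{3/2})$ and the claimed bounds do not follow. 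Remark~1 only tells you each term has a simple form, not that the sum over $t$ is small.

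What the paper actually does is an Abel/summation-by-parts step. For each node it inserts $\pm\langle \mathbf{H}^{t-1}_{nE},\mathbf{x}^{t-1}_n-\mathbf{x}_n\rangle$ so that
\[
\big\langle \mathbf{H}^{t-1}_{nE}-\mathbf{H}^{t}_{nE},\,\mathbf{x}^t_n-\mathbf{x}_n\big\rangle
=\Big[\langle \mathbf{H}^{t-1}_{nE},\mathbf{x}^{t-1}_n-\mathbf{x}_n\rangle-\langle \mathbf{H}^{t}_{nE},\mathbf{x}^{t}_n-\mathbf{x}_n\rangle\Big]
+\big\langle \mathbf{H}^{t-1}_{nE},\,\mathbf{x}^t_n-\mathbf{x}^{t-1}_n\big\rangle .
\]
The bracketed part telescopes in $t$ to boundary terms $Q^T(\mathbf{x})-Q^0(\mathbf{x})$, and \emph{that} is what is bounded by $2REG^2/(\eta\sigma)$. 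The residual $\langle \mathbf{H}^{t-1}_{nE},\mathbf{x}^t_n-\mathbf{x}^{t-1}_n\rangle$ is controlled by Young's inequality together with $\norm{\mathbf{x}^t_n-\mathbf{x}^{t-1}_n}\le\eta\norm{\nabla_{\mathbf{x}_n}\hat{\mathcal{L}}^{t-1}}$ from the primal update; this is precisely where the $\norm{\boldsymbol{\lambda}^{t-1}}^2$ and $\norm{\boldsymbol{\lambda}^{t-2}}^2$ contributions enter (hence the factor $3$ in $\sigma>3KG^2$) and where the extra $\tfrac{3}{2}\eta F^2$ per node appears that pushes the stepsize coefficient up to $\tfrac{7}{2}$. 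Without this telescoping decomposition your delay error is linear in $T$ and the theorem does not go through.
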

\begin{proof}
The proof can be found in Appendix \ref{app:proof_regret_fit_normal_constraint_exchanging_information}.
\end{proof}

An immediate implication of Theorem~\ref{theorem} is that, for step size $\eta=\mathcal{O}(T^{-1/2})$, we have
\begin{itemize}
\item $\textnormal{Reg}_S(T)=\mathcal{O}(T^{1/2})$
\item $\textnormal{Reg}_D(T)=\mathcal{O}(\max\{T^{1/2},T^{1/2}V(\mathbf{x}_*^{1:T})\})$
\item $\textnormal{Fit}(T)=\mathcal{O}(T^{3/4})$
\end{itemize}
We conclude that our distributed algorithm, although using outdated Lagrange multipliers, achieves sublinear static regret and fit; if additionally, $V(\mathbf{x}_*^{1:T})=o(T^{1/2})$, then dynamic regret is also sublinear.
In fact, we achieve the same order of bounds as the centralized algorithm in~\cite{chouayakh2022towards} (in the case where the authors ignore the outages), which tackles the same setting.
Note that choosing $\eta=\mathcal{O}(T^{-1/2})$ yields the minimum regret while preserving a sublinear fit.
	
	\section{Performance Evaluation}
	\label{sec:perf}
	\subsection{Simulation Setup}

\myitem{Topology and box constraints.} 
We assume a fully connected setting with nodes $D,B,S=2$. Moreover, the upper bounds of the control variables are as follows: for every device $d$, $\overline{w}_{d0}=2$, $\overline{w}_{db}=25$, and $\overline{p}_{db}=25$, for every BS $b$, $\overline{y}_{bC}=30$, $\overline{y}_{bs}=25$, and $\overline{q}_{bs}=25$ and for every server $s$, $\overline{z}_{sC}=50$, $\overline{z}_{ss}=15$ and $\overline{z}_{ss'}=10$.

\myitem{Costs.} 
We model the cost functions using expressions for delay  (from $M/M/1$\footnote{To avoid numerical instabilities, e.g., $M/M/1$ delay becoming infinite, we use standard convex extensions~\cite{chouayakh2022towards}.}) and power~\cite{lee2017online}. The local processing delay of node $n$ for $x$ tasks is
$c_{n}(x) = 1/(\overline{x} - x)$, where $\overline{x}$ denotes the capacity of node $n$; this cost is used to model $c^t_d(w^t_{d0})$ and $c^t_{ss}(z^t_{ss})$.
Then, the cost related to wireless offloading, i.e., $c^t_{db}(w^t_{db},p^t_{db})$ and $c^t_{bs}(y^t_{bs},q^t_{bs})$, is modeled as
$c^t_{nn'}(x, y) = 1/(R^t_{nn'}(y) - x) + \frac{1}{2}y^2$, 
where $R^t_{nn'}(y)=b_w\log_2\big(1 + \alpha_{nn'}^t y)$ is the channel rate. Finally, the delays for offloading to the cloud, i.e. $c^t_{bC}(y^t_{bC})$ and $c^t_{sC}(z^t_{sC})$, are modeled as $c_{nC}^t(x)=d_{nC}^t x$, where $d_{nC}^t$ is a time-varying unknown environment parameter.

\myitem{Unknown variables.} 
For each time slot $t$ we need: (a) channel gains $\alpha_{db}^t, \alpha_{bs}^t$, (b) cloud delay costs $d_{bC}^t, d_{sC}^t$, and (c) traffic requests $\mathbf{r}^{t}$. We model (a) and (b) as random variables sampled from $\mathcal{U}(8,15)$ and $\mathcal{U}(3,10)$ respectively. For (c), we mainly use the publicly available Milano dataset~\cite{DVN}; in particular, we extract the aggregate Internet traffic arrivals measured in MBs to $D=2$ BSs (devices in our model). We also provide results using synthetic demands, drawn from $\mathcal{U}(1,10)$.

\myitem{Metrics and Baselines.}
The performance of an online algorithm is evaluated using the static and dynamic regrets (the respective benchmarks are found using CVXPY~\cite{diamond2016cvxpy}), and the fit.
We plot these metrics for proposed Algorithm~\ref{algorithm}, which we refer to as \textit{Cooperative} algorithm, and for two more baselines. First, the \emph{Selfish} is a distributed algorithm without information exchange between nodes; i.e. $\mathbf{H}_{nE}^t = \mathbf{0}$ in~(\ref{eq:gradient-local_external}).
Second, the \emph{Centralized} algorithm assumes a controller with access to all necessary information in order to do the updates optimally. This is essentially the algorithm described in~\cite{yuan2018online}, adapted to our setting with time-varying constraints $\mathbf{h}^{t}$.

\subsection{Simulation Results}

In all our plots, the $x$-axis represents the horizon length $T$, which we vary from $0$ to $300$ time slots. For each algorithm we plot the average value across $4$ independent runs and with shade the corresponding standard deviation. Notice that all metrics are normalized by the horizon $T$.

Our setup is challenging for a distributed algorithm, as the flow conservation constraints \emph{couple} different nodes. To this end, we first investigate the Fit$(T)$ for the Milano and the synthetic datasets in Figs.~\ref{fig:fit-milano}, \ref{fig:fit-synthetic}.
The fit of the \emph{Centralized} algorithm quickly converges to zero, suggesting that it learns to play actions that respect most of the time-varying constraints; the reason is that it performs the best possible primal and dual updates with the freshest information. 
The fit of the proposed \emph{Cooperative} algorithm converges almost together with the \emph{Centralized} for the Milano demands (Fig.~\ref{fig:fit-milano}) and slightly slower for the synthetic ones (Fig.~\ref{fig:fit-synthetic}). Therefore, the modified gradients proposed in our algorithm suffice in order to satisfy the constraints in the long run.
The \emph{Selfish} baseline exemplifies the necessity of at least some information exchange between the nodes; we can see in both figures the fit increasing. This behavior is expected, as the Fit$(T)$ of \emph{Centralized} and \emph{Collaborative} is sublinear, whereas the one of \emph{Selfish} can be shown to be linear as its updates totally ignore the coupling (flow conservation) constraints.

Having discussed the ``feasibility'' aspect of the algorithms (i.e., how they perform in terms of constraints) we now focus on the objective function and in particular on the regrets in Figs.~\ref{fig:static-milano}, \ref{fig:dynamic-milano}. We plot these metrics only for the Milano dataset; but the respective plots for the synthetic one are similar.
A first observation is that the regrets of the \emph{Selfish} algorithm are the lowest among all three methods. This should not come as a surprise, since by construction, the algorithm solves a more relaxed version of the problem (ignores flow constraints) and therefore can achieve better cost values. The \emph{Centralized} algorithm has slightly higher regrets, which is justified by its effort to also satisfy the fit. Finally, our \emph{Collaborative} algorithm has regrets that also go to zero and are very close to the ones of the \emph{Centralized} solution.

\begin{figure}
 \centering
 \subfigure[Fit - Milano]{\includegraphics[width=0.48\columnwidth]{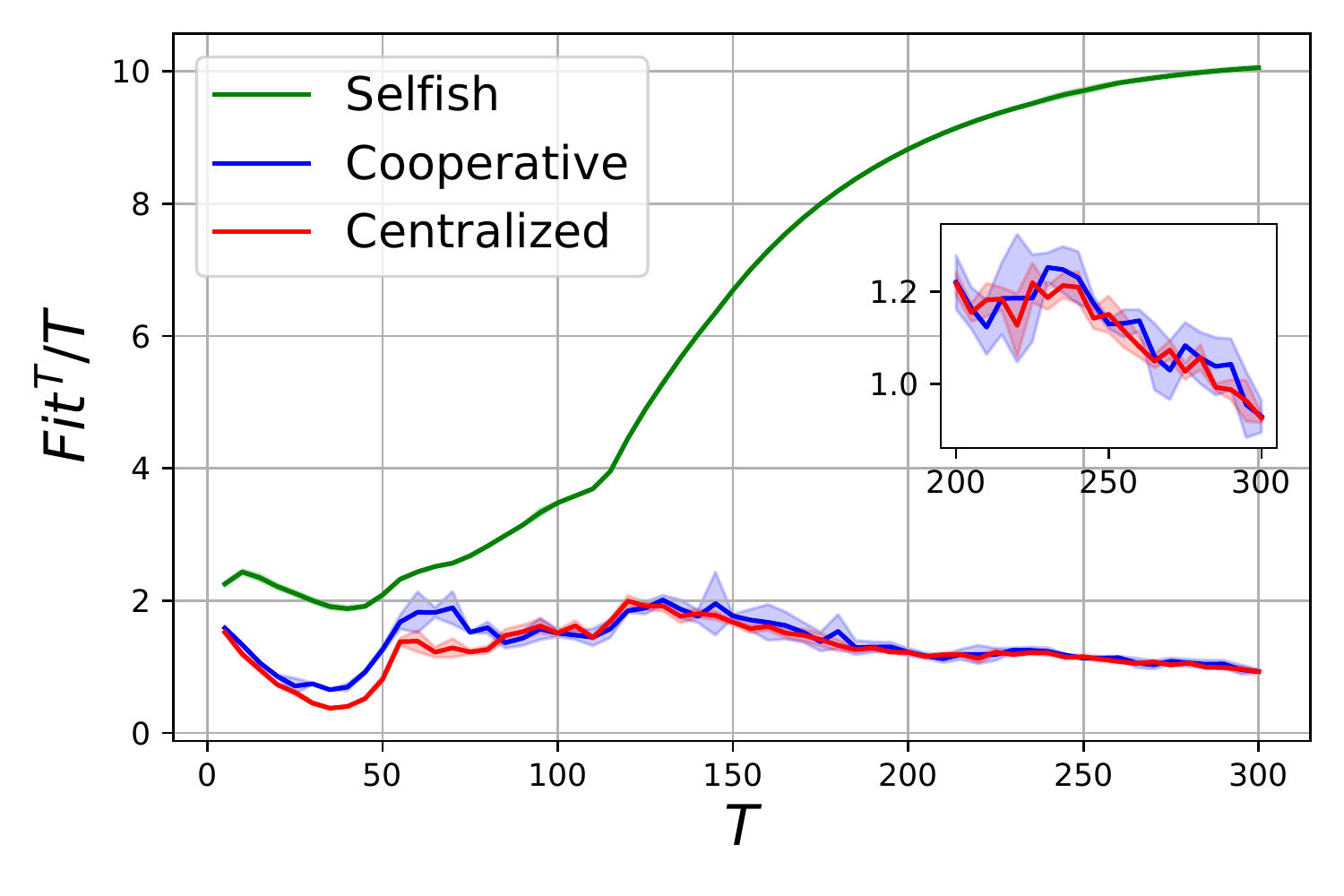}\label{fig:fit-milano}}
 \subfigure[Fit - Synthetic]{\includegraphics[width=0.48\columnwidth]{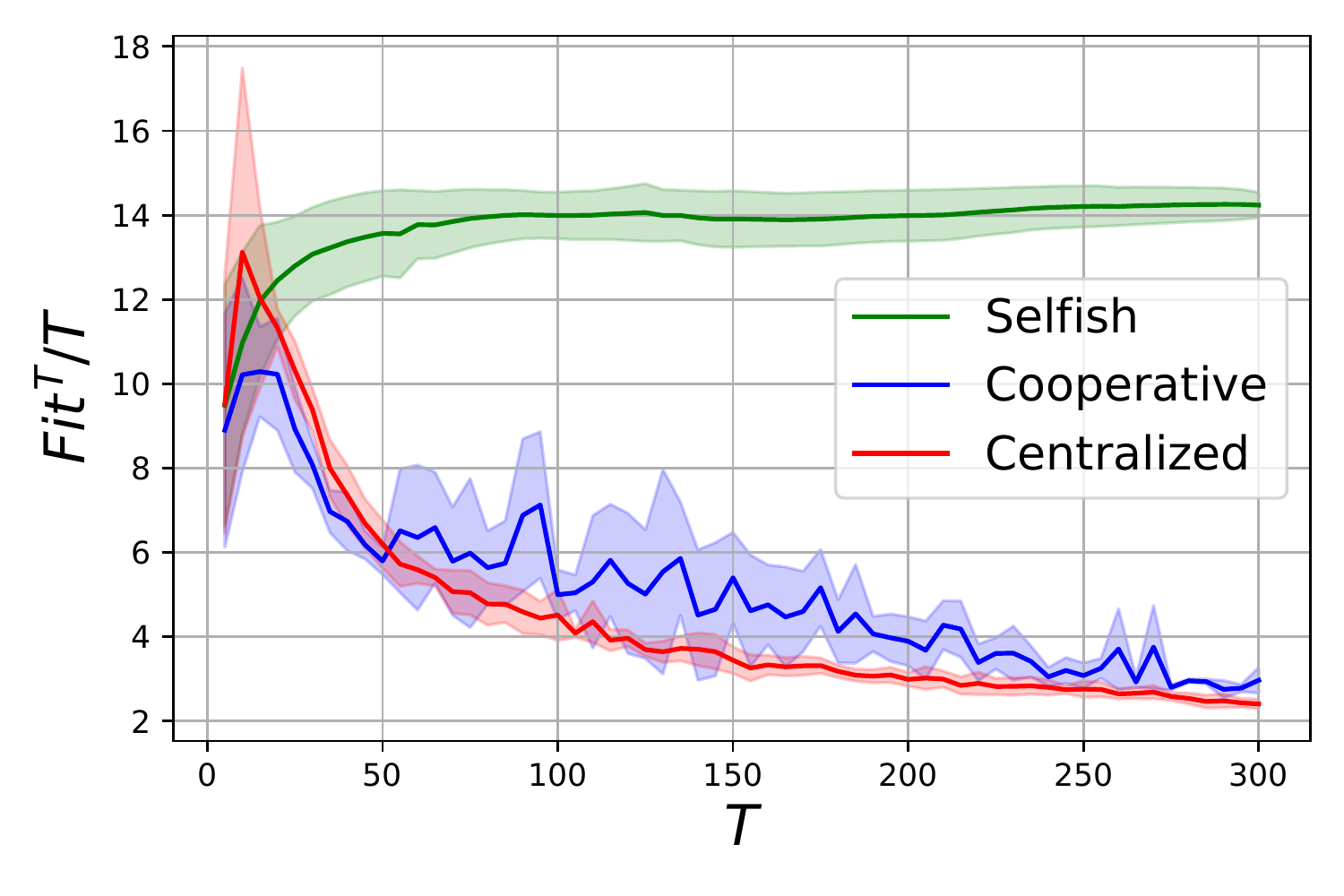}\label{fig:fit-synthetic}}
 \subfigure[Static regret - Milano]{\includegraphics[width=0.48\columnwidth]{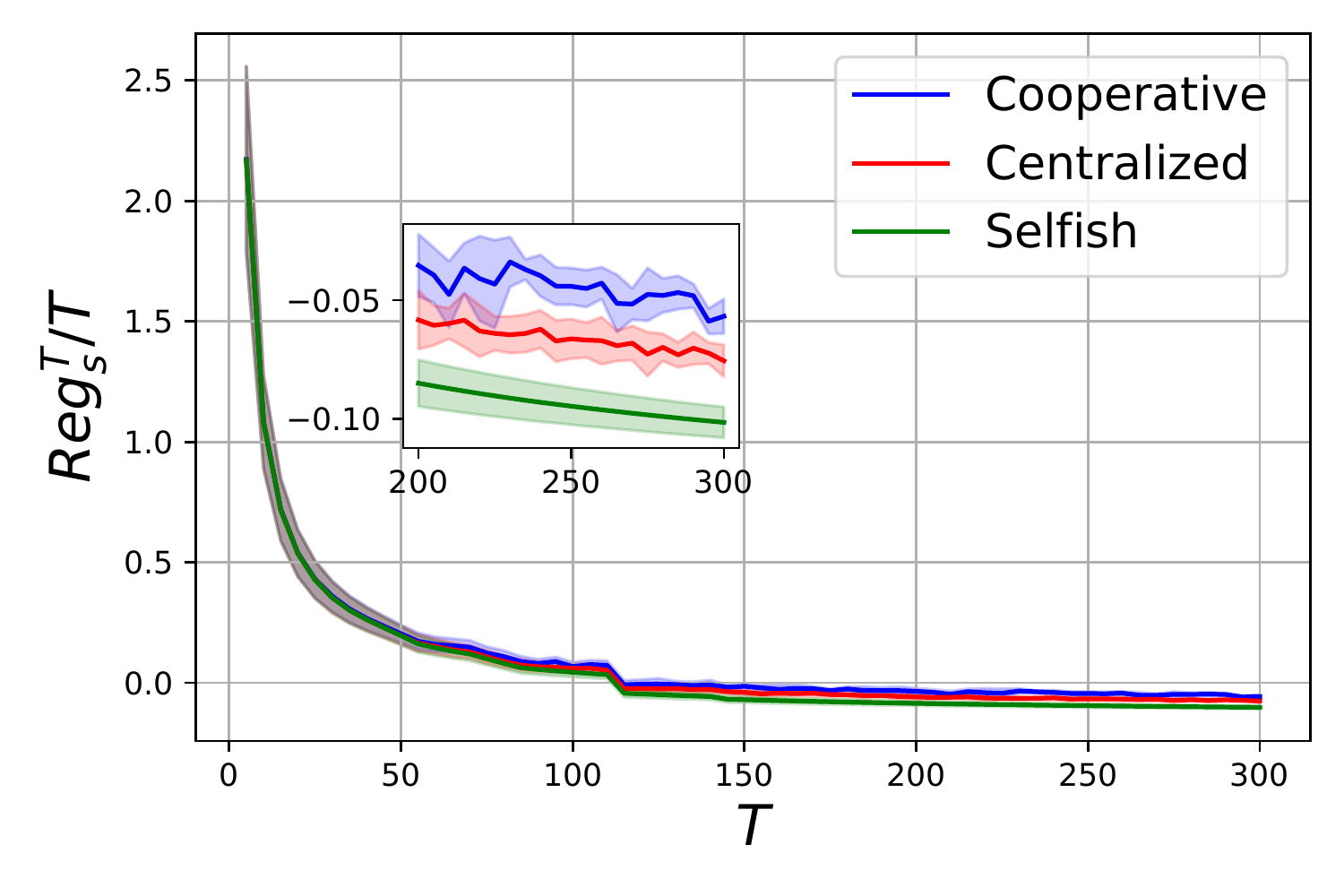}\label{fig:static-milano}}
 \subfigure[Dynamic Regret - Milano]{\includegraphics[width=0.48\columnwidth]{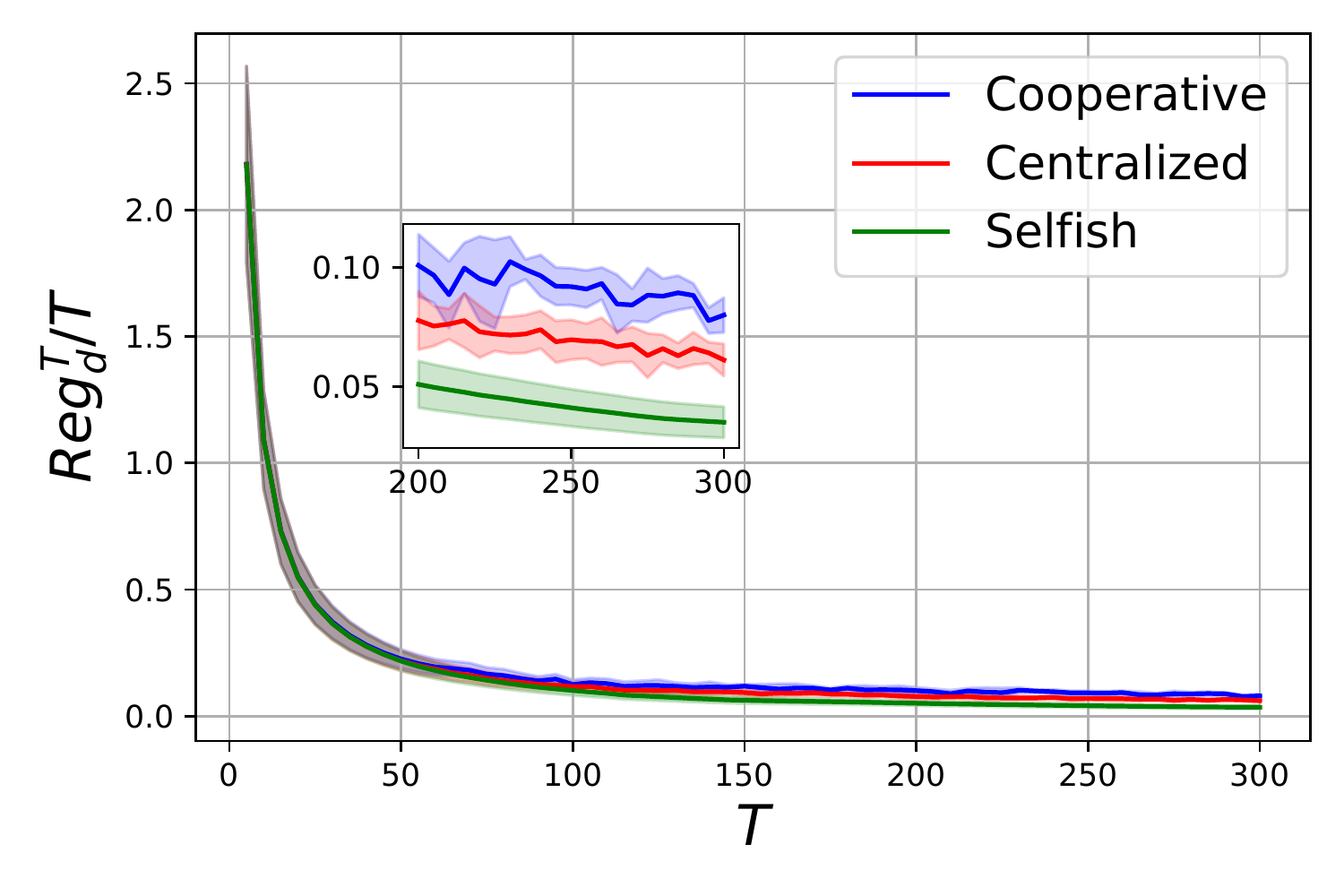}\label{fig:dynamic-milano}}
 \caption{Performance metrics vs horizon length $T$}
 \label{fig:results-oco}
 \end{figure}

Finally, we comment on the jump at $T\approx110$ in Fig.~\ref{fig:static-milano}, which is not present in Fig.~\ref{fig:dynamic-milano}. 
The difference of the two plots lies with the benchmarks; in Fig.~\ref{fig:gap}, the $y$-axis is in fact the cost gap between them, i.e. $\frac{1}{T}\sum_{t=1}^T \Big( f^t(x_*) - f^t(x^t_*) \Big)$; and in that plot we obviously see that same jump.
This behavior can be explained by the change of demand in Fig.~\ref{fig:demands}. Essentially, the static benchmark, for $T>110$, solves an optimization problem by finding a feasible $x_*$ for that extreme demand (at $T\approx110$), and will be then \emph{more constrained} compared to the dynamic benchmark, which solves the problem for each $t$ individually.

\begin{figure}
\centering
\subfigure[]{\includegraphics[width=0.48\columnwidth]{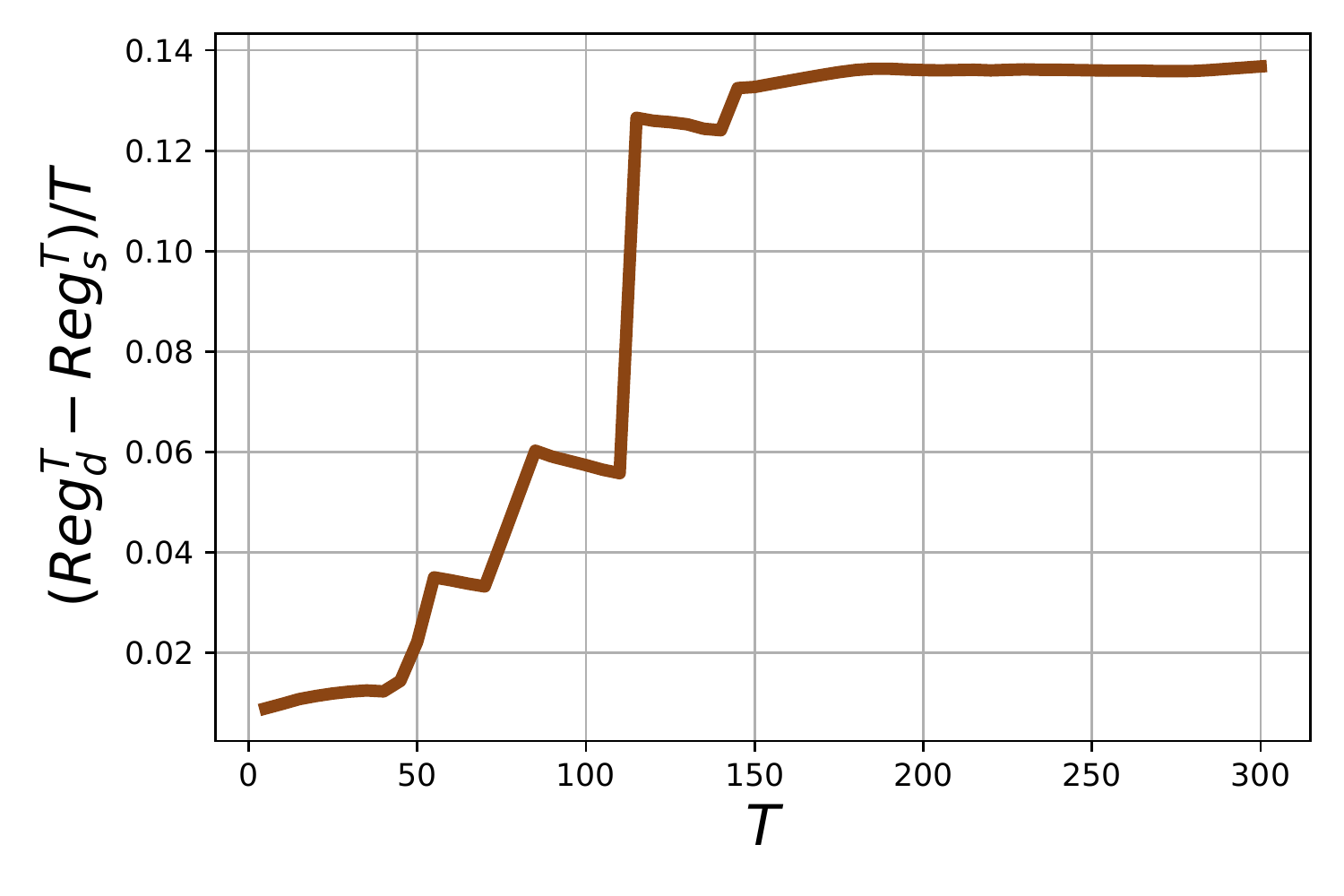}\label{fig:gap}}
\subfigure[]{\includegraphics[width=0.48\columnwidth]{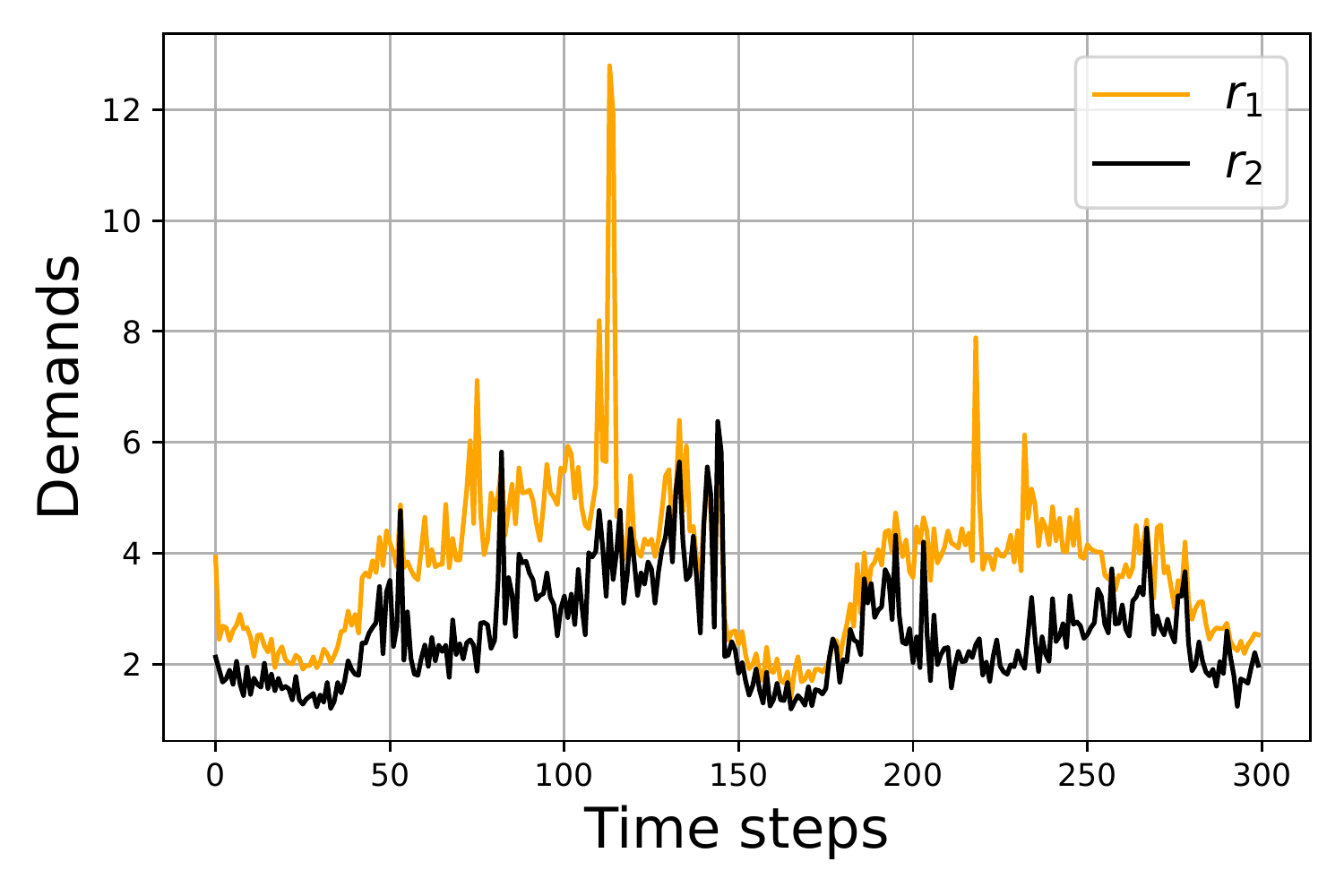}\label{fig:demands}}
\caption{Milano dataset: (a) Difference of regrets (and benchmarks) vs horizon $T$; (b) Demands of the devices over time.}
\label{fig:demands-benchmarks}
\end{figure}

	\section{Conclusions}
	\label{sec:conclusions}
	We revisit the problem of resource allocation in an IoT network, where devices can process part of the traffic requests, and/or offload the rest to more powerful computational entities (cloud, edge servers). 
The distributed nature of the setting, as well as the unpredictable environment, motivated us to model the network nodes (devices, BSs, servers) as distributed OCO actors. However, the nodes of the network are naturally coupled by flow conservation constraints at each node, which we model as long-term constraints, and as a result a fully decentralized algorithm is no longer possible.

In order to address this challenge, we propose a distributed OCO algorithm with limited communication between nodes, which practically leads to partially outdated gradient updates. Nevertheless, we show theoretically that our algorithm achieves sub-linear regret bound $\mathcal{O}(T^{1/2})$ and sub-linear constraint violation bound $\mathcal{O}(T^{3/4})$, which is the same order of bounds as a centralized algorithm for this setting. Numerical results
based on real data traces confirm our theoretical findings.
	
	\bibliographystyle{IEEEtran}
	\bibliography{ref}
	
	\newpage
	\appendices 
	\onecolumn
	
	\section{Proof of Theorem \ref{theorem}}\label{app:proof_regret_fit_normal_constraint_exchanging_information}

In this section, we provide the upper bounds of the static regret, the dynamic regret and the fit as a function of $T$.                                     
To that extent, we remind that:
\begin{align*}
	 f^t(\mathbf{x}^t)&=\sum_{d \in D}f_{d}^t(\mathbf{x}_{d}^t)+\sum_{b \in B}f_{b}^t(\mathbf{x}_{b}^t)+\sum_{s \in S}f_{s}^t(\mathbf{x}_{s}^t) \\
 	 \mathcal{L}^{t}(\mathbf{x}^{t},\boldsymbol{\lambda}^{t})&= f^t(\mathbf{x}^t)+ [\mathbf{h}^{t}(\mathbf{x}^{t})]^{\top}\boldsymbol{\lambda}^{t}, \quad \text{with} \;\; 
 	 \boldsymbol{\lambda}^t = \left((\boldsymbol{\lambda}_{d}^{t})_{d \in D},(\boldsymbol{\lambda}^{t}_{b})_{b \in B},(\boldsymbol{\lambda}_{s}^{t})_{s \in S}\right).
\end{align*}
For ease of understanding, we also recall that:
\begin{itemize}
	\item $\eta$ is the step-size of the decentralized algorithm.
	\item $|f^t_n|, \norm{\nabla f^t_n}\le F$; and $\norm{\nabla g^t_{n,i}}, \norm {\mathbf{g}^t_n}\le G$ for $n\in\mathcal{D,B,S}$.
	\item $R$ is the radius of the space set $\Omega$.
	\item $\sigma$ is a constant that we will define later. 
	\item $D,B,S$ is the number of devices, BSs and servers respectively.
	\item $N$ is the number of nodes, i.e. $N=D+B+S$.
	\item $M$ is the number of the constraint functions, with $M=D(B+1)+B(S+1)+S$.
	\item $E$ is the total number of edges, i.e. $E=DB+BS+S(S-1)$.
\end{itemize}

In order to prove the present theorem, let us first introduce the following Lemma.
\begin{lemma}\label{lem:lemma1_main_inequality}
	For any node $n$, the next inequality holds:
	\begin{align} \label{eq:proof_final_expression_device}
	(\mathbf{x}_{n}^{t}-\mathbf{x}_{n})^{\top}\nabla_{\mathbf{x}_{n}} \mathcal{L}^t(\mathbf{x}^t,\boldsymbol{\lambda}^t) &\le \frac{1}{2\eta}\left(\norm{\mathbf{x}_{n}-\mathbf{x}_{n}^{t}}^{2} - \norm{\mathbf{x}_{n}-\mathbf{x}_{n}^{t+1}}^{2}\right) \nonumber\\
	&\quad +2\eta F^2+2\eta(c_{1,n}+c_{2,n})  G^2 \norm{\boldsymbol{\lambda}^t}^2+ 2 \eta c_{2,n} G^2 \left(\norm{\boldsymbol{\lambda}^t}^2+\norm{\boldsymbol{\lambda}^{t-1}}^2 \right) \nonumber\\
	&\quad+\frac{3}{2}\eta \left(F^2+ c_{1,n} G^2\norm{\boldsymbol{\lambda}^{t-1}}^2+ c_{2,n} G^2 \norm{\boldsymbol{\lambda}^{t-2}}^2 \right)+\frac{\eta }{2}  c_{2,n} G^2 \norm{\boldsymbol{\lambda}^{t-1}}^2\nonumber\\
	&\quad+(\mathbf{x}_{n}^t-\mathbf{x}_{n})^{\top}\mathbf{H}^t_{nE}-(\mathbf{x}_{n}^{t-1}-\mathbf{x}_{n})^{\top} \mathbf{H}^{t-1}_{nE} \nonumber \\
	&= \frac{1}{2\eta}\left(\norm{\mathbf{x}_{n}-\mathbf{x}_{n}^{t}}^{2} - \norm{\mathbf{x}_{n}-\mathbf{x}_{n}^{t+1}}^{2}\right) + \frac{7}{2}\eta F^2 \nonumber\\
	&\quad + 2(c_{1,n}+2c_{2,n})\eta G^2 \norm{\boldsymbol{\lambda}^t}^2+ \frac{1}{2} (3c_{1,n}+5c_{2,n})\eta G^2 \norm{\boldsymbol{\lambda}^{t-1}}^2+ \frac{3}{2}c_{2,n} \eta  G^2 \norm{\boldsymbol{\lambda}^{t-2}}^2  \nonumber\\
	&\quad+(\mathbf{x}_{n}^t-\mathbf{x}_{n})^{\top}\mathbf{H}^t_{nE}-(\mathbf{x}_{n}^{t-1}-\mathbf{x}_{n})^{\top} \mathbf{H}^{t-1}_{nE}
	\end{align}
	\end{lemma}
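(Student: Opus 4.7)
The plan is to start from the projected primal update \eqref{eq:primal-node} with the outdated gradient \eqref{eq:gradient-local_external_approximation} and apply the standard one-step analysis of projected gradient descent, then convert the approximated Lagrangian gradient $\nabla_{\mathbf{x}_n}\hat{\mathcal{L}}^t$ into the true gradient $\nabla_{\mathbf{x}_n}\mathcal{L}^t$ by telescoping away the outdated external term $\mathbf{H}^{t-1}_{nE}$. Concretely, since $\mathbf{x}_n \in \Omega_n$ and projection onto $\Omega_n$ is non-expansive, expanding $\norm{\mathbf{x}_n - \mathbf{x}_n^{t+1}}^2 \le \norm{\mathbf{x}_n - \mathbf{x}_n^t + \eta\nabla_{\mathbf{x}_n}\hat{\mathcal{L}}^t}^2$ and rearranging gives the familiar bound
\begin{equation*}
(\mathbf{x}_n^t - \mathbf{x}_n)^\top \nabla_{\mathbf{x}_n}\hat{\mathcal{L}}^t \le \frac{1}{2\eta}\bigl(\norm{\mathbf{x}_n - \mathbf{x}_n^t}^2 - \norm{\mathbf{x}_n - \mathbf{x}_n^{t+1}}^2\bigr) + \frac{\eta}{2}\norm{\nabla_{\mathbf{x}_n}\hat{\mathcal{L}}^t}^2,
\end{equation*}
which already produces the distance-squared telescope appearing in line 1 of the lemma.

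Next, using \eqref{eq:gradient-local_external}--\eqref{eq:gradient-local_external_approximation} I write $\nabla_{\mathbf{x}_n}\mathcal{L}^t = \nabla_{\mathbf{x}_n}\hat{\mathcal{L}}^t + \mathbf{H}^t_{nE} - \mathbf{H}^{t-1}_{nE}$, take inner product with $\mathbf{x}_n^t - \mathbf{x}_n$, and perform a telescope trick on the outdated contribution by inserting $\mathbf{x}_n^{t-1}$:
\begin{equation*}
-(\mathbf{x}_n^t - \mathbf{x}_n)^\top \mathbf{H}^{t-1}_{nE} = -(\mathbf{x}_n^{t-1} - \mathbf{x}_n)^\top \mathbf{H}^{t-1}_{nE} - (\mathbf{x}_n^t - \mathbf{x}_n^{t-1})^\top \mathbf{H}^{t-1}_{nE}.
\end{equation*}
The first piece combines with $(\mathbf{x}_n^t - \mathbf{x}_n)^\top \mathbf{H}^t_{nE}$ to yield exactly the telescoping remainder appearing in line 4 of the claim; the second is a ``drift'' term that must be absorbed into the other pieces.

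The drift is bounded via Cauchy--Schwarz and weighted Young's inequality with parameter $\eta$,
\begin{equation*}
-(\mathbf{x}_n^t - \mathbf{x}_n^{t-1})^\top \mathbf{H}^{t-1}_{nE} \le \frac{1}{2\eta}\norm{\mathbf{x}_n^t - \mathbf{x}_n^{t-1}}^2 + \frac{\eta}{2}\norm{\mathbf{H}^{t-1}_{nE}}^2,
\end{equation*}
and by non-expansiveness applied at the previous iteration, $\norm{\mathbf{x}_n^t - \mathbf{x}_n^{t-1}} \le \eta\norm{\nabla_{\mathbf{x}_n}\hat{\mathcal{L}}^{t-1}}$, the first summand becomes $\tfrac{\eta}{2}\norm{\nabla_{\mathbf{x}_n}\hat{\mathcal{L}}^{t-1}}^2$. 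This is exactly where the lemma's dependence on $\norm{\boldsymbol{\lambda}^{t-2}}^2$ enters, since $\nabla_{\mathbf{x}_n}\hat{\mathcal{L}}^{t-1}$ contains the twice-outdated external term $\mathbf{H}^{t-2}_{nE}$.

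The final step is to bound each of $\norm{\nabla_{\mathbf{x}_n}\hat{\mathcal{L}}^t}^2$, $\norm{\nabla_{\mathbf{x}_n}\hat{\mathcal{L}}^{t-1}}^2$ and $\norm{\mathbf{H}^{t-1}_{nE}}^2$ using Assumption~\ref{ass:oco} and the sum-of-squares inequality $\norm{\sum_{i=1}^k a_i}^2 \le k\sum_{i=1}^k \norm{a_i}^2$: the local gradient $\nabla f_n^t$ contributes a term of order $F$, each local constraint-gradient contributes $G\norm{\boldsymbol{\lambda}^t}$ (with $c_{1,n}$ such contributions corresponding to the constraints living at node $n$), and each incoming external coupling contributes $G\norm{\boldsymbol{\lambda}^{t-1}}$ or $G\norm{\boldsymbol{\lambda}^{t-2}}$ (with $c_{2,n}$ such contributions, determined by $\mathcal{E}_n$ through \eqref{eq:gradient-device_external}--\eqref{eq:gradient-server_external}). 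I expect the main obstacle to be precisely this bookkeeping: one must choose $k=2$ or $k=3$ judiciously in the sum-of-squares splits so that the emerging coefficients line up with the claimed $2\eta F^2$ and $\tfrac{3}{2}\eta F^2$ prefactors, and correctly identify the combinatorial constants $c_{1,n}, c_{2,n}$ from the network topology. The structural content of the proof, however, is fully captured by the four steps above, and the second equality in the lemma is just a regrouping of the $\boldsymbol{\lambda}$-indexed terms.
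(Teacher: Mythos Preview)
Your proposal is correct and follows essentially the same skeleton as the paper's proof: projection non-expansiveness plus square expansion, telescoping the outdated external term by inserting $\mathbf{x}_n^{t-1}$, controlling the drift via Young's inequality with weight $\eta$ and then bounding $\norm{\mathbf{x}_n^t-\mathbf{x}_n^{t-1}}$ by $\eta\norm{\nabla_{\mathbf{x}_n}\hat{\mathcal{L}}^{t-1}}$, and finally applying Assumption~\ref{ass:oco} with sum-of-squares inequalities.

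The one place where your description differs from the paper is the treatment of the squared-gradient term. You keep $\tfrac{\eta}{2}\norm{\nabla_{\mathbf{x}_n}\hat{\mathcal{L}}^t}^2$ intact, whereas the paper first rewrites the update direction as $\nabla_{\mathbf{x}_n}\mathcal{L}^t + (\mathbf{H}^{t-1}_{nE}-\mathbf{H}^{t}_{nE})$ and immediately applies $\norm{a+b}^2\le 2\norm{a}^2+2\norm{b}^2$ at that point, yielding the two separate contributions $\eta\norm{\nabla_{\mathbf{x}_n}\mathcal{L}^t}^2$ and $\eta\norm{\mathbf{H}^{t-1}_{nE}-\mathbf{H}^{t}_{nE}}^2$. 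This is precisely what produces the $2\eta F^2$ and the $2\eta(c_{1,n}+c_{2,n})G^2\norm{\boldsymbol{\lambda}^t}^2$ terms (both indexed by $\boldsymbol{\lambda}^t$, not a mix of $\boldsymbol{\lambda}^t$ and $\boldsymbol{\lambda}^{t-1}$). If you bound $\tfrac{\eta}{2}\norm{\nabla_{\mathbf{x}_n}\hat{\mathcal{L}}^t}^2$ directly with a single $k$-split you will not recover those exact constants; to match the lemma as stated you must perform the same two-term split first. This is purely a bookkeeping refinement of what you already outlined, not a structural gap.
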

	\noindent where $\mathbf{H}^{t}_{nE}$ is defined in \eqref{eq:gradient-device_external}-\eqref{eq:gradient-server_external}, $c_{1,n}$ is the number of local constraints at node $k$ and $c_{2,n}$ is the number of flow constraints from other nodes that include an optimization variable of node $k$. Specifically:
	\begin{equation}
		c_{1,n} = \begin{cases}
			B+1 &\text{for $n = d$}\\
			N+1 &\text{for $n = b$}\\
			1   &\text{for $n = s$}
		\end{cases}, \quad
		c_{2,n} = \begin{cases}
			B   &\text{for $n = d$}\\
			N   &\text{for $n = b$}\\
			N-1	&\text{for $n = s$}
		\end{cases}. \nonumber
	\end{equation}

\begin{proof}
	We prove in detail only the expression for the devices, as the respective expressions for BSs and servers is straight-forward following exactly the same steps. Thus, we have 
	
	\begin{align} 
		\norm{\mathbf{x}_{d}-\mathbf{x}_{d}^{t+1}}^{2}  &\overset{(a)}{\le} 
		\norm{\mathbf{x}_{d}-\Bigg( \mathbf{x}_{d}^t-\eta \bigg ( \nabla_{\mathbf{x}_{d}}f_{d}^t(\mathbf{x}_{d}^t)+\nabla_{\mathbf{x}_{d}}^{\top} [\mathbf{h}_{d}^t(\mathbf{x}_{d}^{t})]\boldsymbol{\lambda}^t_{d} +\sum_{b \in \mathcal{B}}\nabla_{\mathbf{x}_{d}}  h^{t-1}_{b0}(\mathbf{x}_{b}^{t-1};\mathbf{x}^{t-1}_\mathcal{D})\lambda^{t-1}_{b0} \bigg ) \Bigg ) }^2\nonumber \\
		&\overset{(b)}{=}\norm{\big (\mathbf{x}_{d}-\mathbf{x}_{d}^t \big )+\eta \Bigg [ \nabla_{\mathbf{x}_{d}} \mathcal{L}^t(\mathbf{x}^t,\boldsymbol{\lambda}^t)+\nabla_{\mathbf{x}_{d}}\bigg [\sum\limits_{b \in\mathcal{B}} \Big[h^{t-1}_{b0}(\mathbf{x}_{b}^{t-1};\mathbf{x}^{t-1}_{\mathcal{D}})\lambda_{b0}^{t-1}-h^t_{b0}(\mathbf{x}_{b}^{t};\mathbf{x}^{t}_\mathcal{D})\lambda_{b0}^{t} \Big ] \bigg ]\Bigg ]}^2 \nonumber \\		 			 		
		&\overset{(c)}{\le}\norm{\mathbf{x}_{d}-\mathbf{x}_{d}^{t}}^{2}+2\eta^2\norm{\nabla_{\mathbf{x}_{d}}\mathcal{L}^t(\mathbf{x}^t,\boldsymbol{\lambda}^t)}^2 +2\eta^2\norm{\nabla_{\mathbf{x}_{d}}\bigg [\sum\limits_{b \in\mathcal{B}} \Big [h^{t-1}_{b0}(\mathbf{x}_{b}^{t-1};\mathbf{x}^{t-1}_{\mathcal{D}})\lambda_{b0}^{t-1}-h^t_{b0}(\mathbf{x}_{b}^{t};\mathbf{x}^{t}_{\mathcal{D}})\lambda_{b0}^{t} \Big ]\bigg ]}^2 \nonumber\\
		&\quad +2\eta (\mathbf{x}_{d}-\mathbf{x}^{t}_{d})^{\top}\nabla_{\mathbf{x}_{d}} \mathcal{L}^t(\mathbf{x}^t,\boldsymbol{\lambda}^t) +2\eta (\mathbf{x}_{d}-\mathbf{x}_{d}^t)^{\top}\nabla_{\mathbf{x}_{d}} \bigg [\sum\limits_{b \in\mathcal{B}} \Big [h^{t-1}_{b0}(\mathbf{x}_{b}^{t-1};\mathbf{x}^{t-1}_{\mathcal{D}})\lambda_{b0}^{t-1}-h^t_{b0}(\mathbf{x}_{b}^{t};\mathbf{x}^{t}_{\mathcal{D}})\lambda_{b0}^{t}\Big ]\bigg ], 
	\end{align}
	where:
	\begin{itemize}
		\item (a) comes from the update rule of $\mathbf{x}_{d}^{t+1}$ and Assumption \ref{ass:oco} on the projection non-expansiveness.
		\item (b) uses the expression of  $\nabla_{\mathbf{x}_{d}}\mathcal{L}^t(\mathbf{x}^{t}, \boldsymbol{\lambda}^{t})$ from \eqref{eq:gradient-local_external}.
		\item (c) uses $\norm{\mathbf{x}+\mathbf{y}}^2=\norm{\mathbf{x}}^2+\norm{\mathbf{y}}^2+2\mathbf{x}^{\top}\mathbf{y}$ and then $\norm{\mathbf{x}+\mathbf{y}}^2\le  2(\norm{\mathbf{x}}^2+\norm{\mathbf{y}}^2)$.
	\end{itemize}
	Therefore, we have:
	\begin{align}\label{eq:basic_inequation_device}
		(\mathbf{x}_{d}^{t}-\mathbf{x}_{d})^{\top}\nabla_{\mathbf{x}_{d}} \mathcal{L}^t(\mathbf{x}^t,\boldsymbol{\lambda}^t) &\le \frac{1}{2\eta}\bigg(\norm{\mathbf{x}_{d}-\mathbf{x}_{d}^{t}}^{2} - \norm{\mathbf{x}_{d}-\mathbf{x}_{d}^{t+1}}^{2}\bigg)+\eta \norm{\nabla_{\mathbf{x}_{d}} \mathcal{L}^t(\mathbf{x}^t,\boldsymbol{\lambda}^t)}^2\nonumber\\
		&\quad +\eta\norm{\nabla_{\mathbf{x}_{d}}\sum\limits_{b \in\mathcal{B}}\big [ h^{t-1}_{b0}(\mathbf{x}_{b}^{t-1};\mathbf{x}^{t-1}_{\mathcal{D}})\lambda_{b0}^{t-1}-h^t_{b0}(\mathbf{x}_{b}^{t};\mathbf{x}^{t}_{\mathcal{D}})\lambda_{b0}^{t}\big ]}^2 \nonumber \\
		&\quad +(\mathbf{x}_{d}-\mathbf{x}_{d}^t)^{\top}\nabla_{\mathbf{x}_{d}} \bigg [\sum\limits_{b \in\mathcal{B}} \big [h^{t-1}_{b0}(\mathbf{x}_{b}^{t-1};\mathbf{x}^{t-1}_{\mathcal{D}})\lambda_{b}^{t-1}-h^t_{b0}(\mathbf{x}_{b}^{t};\mathbf{x}^{t}_{\mathcal{D}})\lambda_{b0}^{t}\big ]\bigg ].
	\end{align}
	We continue by analyzing each term of \eqref{eq:basic_inequation_device}. Specifically, we have:
	\begin{align}  \label{eq:gradient_lagran_funct_device}
		\eta \norm{\nabla_{\mathbf{x}_{d}} \mathcal{L}^t(\mathbf{x}^t,\boldsymbol{\lambda}^t)}^2&=\eta \norm{\nabla_{\mathbf{x}_{d}}f_{d}^t(\mathbf{x}_{d}^t)+\nabla_{\mathbf{x}_{d}}^{\top} [\mathbf{h}_{d}^t(\mathbf{x}_{d}^{t})]\boldsymbol{\lambda}^t_{d} +\sum_{b \in \mathcal{B}}\nabla_{\mathbf{x}_{d}}  h^t_{b0}(\mathbf{x}_{b}^{t};\mathbf{x}^{t}_{\mathcal{D}})\lambda^{t}_{b0}}^2\nonumber\\
		&\overset{(a)}{\le} \eta \Bigg (F+  \norm{\nabla_{\mathbf{x}_{d}}^{\top} [\mathbf{h}_{d}^t(\mathbf{x}_{d}^{t})]\boldsymbol{\lambda}^t_{d} +\sum_{b \in \mathcal{B}}\nabla_{\mathbf{x}_{d}}  h^t_{b0}(\mathbf{x}_{b}^{t};\mathbf{x}^{t}_{\mathcal{D}})\lambda^{t}_{b0}}\Bigg)^2  \nonumber\\ 
		&\overset{(b)}{\le} 2 \eta F^2+2\eta \bigg (\norm{\nabla_{\mathbf{x}_{d}}^{\top} [\mathbf{h}_{d}^t(\mathbf{x}_{d}^{t})]\boldsymbol{\lambda}^t_{d}} +\sum_{b \in \mathcal{B}}\norm{\nabla_{\mathbf{x}_{d}}  h^t_{b0}(\mathbf{x}_{b}^{t};\mathbf{x}^{t}_{\mathcal{D}})\lambda^{t}_{b0}} \bigg )^2 \nonumber \\
		&\overset{(c)}{\le} 2 \eta F^2+ 2\eta G^2\big ( |\lambda^t_{d0}| +\sum\limits_{b \in \mathcal{B}}  |\lambda_{db}^t|+\sum\limits_{b\in \mathcal{B}} |\lambda_{b0}^t|\big )^2 \nonumber \\
		&\overset{(d)}{\le} 2\eta F^2+2\eta (2B+1) G^2 \norm{\boldsymbol{\lambda}^t}^2		
	\end{align}
	where:
	\begin{itemize}
		\item (a) uses $\norm{\mathbf{x}+\mathbf{y}} \le \norm{\mathbf{x}}+\norm{\mathbf{y}}$ and then $\norm{\nabla{f^t_{d}(\mathbf{x}^t)}} \le F$.
		\item (b) uses first $\norm{\mathbf{x}+\mathbf{y}}^2 \le 2(\norm{\mathbf{x}}^2+\norm{\mathbf{y}}^2)$ and then $\norm{\mathbf{x}+\mathbf{y}} \le \norm{\mathbf{x}}+\norm{\mathbf{y}}$.
		\item (c) follows from $\norm{\nabla{h^t_{i}(\mathbf{x}^t)}} \le G \; \forall i \in [1,M]$.
		\item (d) uses $(\sum\limits_{i=1}^{K}a_i)^2 \le K\sum\limits_{i=1}^{K}a_i^2$ and the fact that $2B+1 \le M \Rightarrow |{\lambda}_{d0}^t|^2+\sum\limits_{b\in \mathcal{B}}|\lambda_{db}^t|^2 + \sum\limits_{b\in \mathcal{B}}|\lambda_{b0}^t|^2 \leq \norm{\boldsymbol{\lambda}^t}^2$.
	\end{itemize}  
	 
	Following the same procedure, i.e. steps (b)-(d), for the second term of \eqref{eq:basic_inequation_device}, we get:
	\begin{align}\label{eq:gradient_difference_constraint_device}
		\eta \norm{\nabla_{\mathbf{x}_{d}}\sum\limits_{b \in\mathcal{B}} \big [ h^{t-1}_{b0}(\mathbf{x}_{b}^{t-1};\mathbf{x}^{t-1}_{\mathcal{D}})\lambda_{b0}^{t-1}-h^t_{b0}(\mathbf{x}_{b}^{t};\mathbf{x}^{t}_{\mathcal{D}})\lambda_{b}^{t}\big ]}^2  
		&\le  2\eta G^2 \left( \Big( \sum\limits_{b \in\mathcal{B}} | \lambda_{b0}^{t-1}| \Big )^2+ \Big (\sum\limits_{b \in\mathcal{B}}|\lambda_{b0}^{t}|\Big )^2\right)\nonumber\\
		&\le 2\eta B G^2 \Big (\norm{\boldsymbol{\lambda}^t}^2+\norm{\boldsymbol{\lambda}^{t-1}}^2 \Big )
	\end{align}
	
	For the last term of \eqref{eq:basic_inequation_device}, if we add and subtract the term $(\mathbf{x}_{d}^{t-1})^{\top}\nabla_{\mathbf{x}_{d}} \bigg [\sum\limits_{b \in\mathcal{B}} h^{t-1}_{b0}(\mathbf{x}_{b}^{t-1};\mathbf{x}^{t-1}_{\mathcal{D}})\lambda_{b0}^{t-1} \bigg ]$ we get:
	\begin{align}\label{eq:part_telescopic}
		& (\mathbf{x}_{d}-\mathbf{x}_{d}^t)^{\top}\nabla_{\mathbf{x}_{d}} \bigg [\sum\limits_{b \in\mathcal{B}} \big [h^{t-1}_{b0}(\mathbf{x}_{b}^{t-1};\mathbf{x}^{t-1}_{\mathcal{D}})\lambda_{b0}^{t-1}-h^t_{b0}(\mathbf{x}_{b}^{t};\mathbf{x}^{t}_{\mathcal{D}})\lambda_{b0}^{t}\big ]\bigg ] = (\mathbf{x}_{d}-\mathbf{x}_{d}^{t-1})^{\top}\nabla_{\mathbf{x}_{d}} \bigg [\sum\limits_{b \in\mathcal{B}} h^{t-1}_{b0}(\mathbf{x}_{b}^{t-1};\mathbf{x}^{t-1}_{\mathcal{D}})\lambda_{b0}^{t-1} \bigg ] \nonumber \\		
		&\quad -(\mathbf{x}_{d}-\mathbf{x}_{d}^{t})^{\top}\nabla_{\mathbf{x}_{d}} \bigg [\sum\limits_{b \in\mathcal{B}}h^t_{b0}(\mathbf{x}_{b}^{t};\mathbf{x}^{t}_{\mathcal{D}})\lambda_{b0}^{t}\bigg ] +(\mathbf{x}_{d}^{t-1}-\mathbf{x}_{d}^{t})^{\top}\nabla_{\mathbf{x}_{d}} \bigg [\sum\limits_{b \in\mathcal{B}} h^{t-1}_{b0}(\mathbf{x}_{b}^{t-1};\mathbf{x}^{t-1}_{\mathcal{D}})\lambda_{b0}^{t-1} \bigg ].
	\end{align}
	For the last term of \eqref{eq:part_telescopic} we have:
	\begin{align} \label{eq:inner_product_device}
		&(\mathbf{x}_{d}^{t-1}-\mathbf{x}_{d}^{t})^{\top}\nabla_{\mathbf{x}_{d}} \bigg [\sum\limits_{b \in\mathcal{B}} h^{t-1}_{b0}(\mathbf{x}_{b}^{t-1};\mathbf{x}^{t-1}_{\mathcal{D}})\lambda_{b0}^{t-1} \bigg ] \overset{(a)}{\le}
		\frac{1}{2\eta} \norm{\mathbf{x}_{d}^t-\mathbf{x}_{d}^{t-1}}^2+\frac{\eta}{2} \norm{\nabla_{\mathbf{x}_{d}} \bigg [\sum\limits_{b \in\mathcal{B}} h^{t-1}_{b0}(\mathbf{x}_{b}^{t-1};\mathbf{x}^{t-1}_{\mathcal{D}})\lambda_{b0}^{t-1} \bigg ]}^2 \nonumber \\
		&\quad \overset{(b)}{\le} \frac{\eta}{2} \norm{\nabla_{\mathbf{x}_{d}}f_d^{t-1}(\mathbf{x}_{d}^{t-1})+\nabla^{\top}_{\mathbf{x}_{d}} [\mathbf{h}^{t-1}_{d}(\mathbf{x}_{d}^{t-1})]\boldsymbol{\lambda}_{d}^{t-1} +\sum\limits_{b \in \mathcal{B}} \nabla_{\mathbf{x}_{d}}[h^{t-2}_{b0}(\mathbf{x}_{b}^{t-2}; \mathbf{x}^{t-2}_{\mathcal{D}})] \lambda_{b0}^{t-2}}^2+\frac{\eta}{2}  B G^2 \norm{\boldsymbol{\lambda}^{t-1}}^2 \nonumber \\
		&\quad \overset{(c)}{\le} \frac{3\eta}{2} \left( F^2+(B+1)G^2 \norm{\boldsymbol{\lambda}^{t-1}}^2+ BG^2 \norm{\boldsymbol{\lambda}^{t-2}}^2\right) +\frac{\eta}{2} B G^2 \norm{\boldsymbol{\lambda}^{t-1}}^2,
	\end{align}
		where:
	\begin{itemize}
		\item (a) uses $ \mathbf{x}^{\top}\mathbf{y} \leq \frac{1}{2\eta}\left(\norm{\mathbf{x}}^2 + \eta^2\norm{\mathbf{y}}^2\right)$ 
		\item (b) comes from the update rule of $\mathbf{x}_{d}^{t}$ and Assumption \ref{ass:oco}.  
		\item (c) uses  $\norm{\mathbf{x}+\mathbf{y}+\mathbf{z}}^2 \le 3 (\norm{\mathbf{x}}^2+\norm{\mathbf{y}}^2+\norm{\mathbf{z}}^2)$
	\end{itemize} 

	Combining \eqref{eq:basic_inequation_device}-\eqref{eq:inner_product_device} we directly obtain \eqref{eq:proof_final_expression_device}. Following the same steps for BSs and servers completes the proof of Lemma \ref{lem:lemma1_main_inequality}.
\end{proof}

The next step to prove the theorem is to combine the results of Lemma \ref{lem:lemma1_main_inequality} for all nodes.
According to Assumption \ref{ass:oco}, $\mathcal{L}^t(.,\boldsymbol{\lambda})$ is a convex function with $\mathbf{x}$ and thus, we have
\begin{align}
	\mathcal{L}^t(\mathbf{x}^t,\boldsymbol{\lambda}^t)-\mathcal{L}^t(\mathbf{x},\boldsymbol{\lambda}^t)& \le (\mathbf{x}^t - \mathbf{x})^{\top} \nabla_x \mathcal{L}^t(\mathbf{x}^t,\boldsymbol{\lambda}^t) \nonumber \\
	&=\sum\limits_{d \in \mathcal{D}} (\mathbf{x}_{d}^t - \mathbf{x}_{d})^{\top} \nabla_{\mathbf{x}_{d}} \mathcal{L}^t(\mathbf{x}^t,\boldsymbol{\lambda}^t)+ \sum\limits_{b \in \mathcal{B}}(\mathbf{x}_{b}^t-\mathbf{x}_{b})^{\top} \nabla_{\mathbf{x}_{b}} \mathcal{L}^t(\mathbf{x}^t,\boldsymbol{\lambda}^t)+ \sum\limits_{s \in \mathcal{S}} (\mathbf{x}_{s}^t - \mathbf{x}_{s})^{\top} \nabla_{\mathbf{x}_{s}} \mathcal{L}^t(\mathbf{x}^t,\boldsymbol{\lambda}^t).
\end{align}

Then, from Lemma \ref{lem:lemma1_main_inequality} we get:
\begin{align} \label{eq:summing-the-lemma-terms}
	\mathcal{L}^t(\mathbf{x}^t, \boldsymbol{\lambda}^t) - \mathcal{L}^t(\mathbf{x}, \boldsymbol{\lambda}^t) &\le \frac{1}{2\eta}\Big(\norm{\mathbf{x}-\mathbf{x}^{t}}^{2} - \norm{\mathbf{x}-\mathbf{x}^{t+1}}^{2}\Big) + \frac{7}{2} \eta N F^2 \nonumber \\
	&\quad +  \Big( (6DB+2D) + (6BS+2B) + (4S^2-2S) \Big)\eta G^2 \norm{\boldsymbol{\lambda}^t}^2 \nonumber \\
	&\quad +  \Big (  (4DB+\frac{3}{2}D) + (4BS+ \frac{3}{2}B) + (\frac{5}{2}S^2-S) \Big) \eta G^2 \norm{\boldsymbol{\lambda}^{t-1}}^2 \nonumber\\
	&\quad +  \frac{3}{2}\Big(  DB+ BS+ S(S-1) \Big) \eta G^2 \norm{\boldsymbol{\lambda}^{t-2}}^2 + Q^{t}(\mathbf{x})-Q^{t-1}(\mathbf{x})
\end{align}
where
\begin{align} \label{eq:qfunc}
	Q^t(\mathbf{x}) &   :=   \sum\limits_{d \in \mathcal{D}}\Bigg [(\mathbf{x}_{d}^t-\mathbf{x}_{d})^{\top}\nabla_{\mathbf{x}_{d}} \bigg [\sum\limits_{b \in\mathcal{B}} h^t_{b0}(\mathbf{x}_{b}^{t};\mathbf{x}^{t}_{\mathcal{D}})\lambda_{b0}^{t} \bigg ] \Bigg ] + \sum\limits_{b \in \mathcal{B}} \Bigg [(\mathbf{x}_{b}^{t}-\mathbf{x}_{b})^{\top}\nabla_{\mathbf{x}_{b}} \bigg [\sum\limits_{s \in \mathcal{S}} h^t_{s0}(\mathbf{x}^{t}_{s};\mathbf{x}^{t}_{\mathcal{B}},\mathbf{x}^{t}_{\mathcal{S}_{-s}})\lambda_{s0}^{t}\bigg ] \Bigg ]\nonumber \\
	&\quad+\sum\limits_{s \in \mathcal{S}} \Bigg[ (\mathbf{x}_{s}^{t}-\mathbf{x}_{s})^{\top}\nabla_{\mathbf{x}_{s}} \bigg [\sum\limits_{s' \in \mathcal{S}_{-s}} h_{s'0}^{t}(\mathbf{x}_{s'}^{t};\mathbf{x}_{\mathcal{B}}^{t},\mathbf{x}^{t}_{\mathcal{S}_{-s'}})\lambda_{s'0}^{t}\bigg ] \Bigg ] 
\end{align}

To upper-bound the multiplicative terms in front of the multipliers in (\ref{eq:summing-the-lemma-terms}), we define
\begin{align}
K=2D(3B+1) + 2B(3S+1) + 2S(2S-1), 
\end{align}
and get
\begin{align}\label{eq:lagranigan_function_inequality}
	\mathcal{L}^t(\mathbf{x}^t,\boldsymbol{\lambda}^t)-\mathcal{L}^t(\mathbf{x},\boldsymbol{\lambda}^t) 
	& \le 
	\frac{1}{2\eta}  \underbrace{ \Big(\norm{\mathbf{x} - \mathbf{x}^{t}}^{2} - \norm{\mathbf{x}-\mathbf{x}^{t+1}}^{2}\Big)  }_{\text{(a)}} + \underbrace{ Q^{t}(\mathbf{x})-Q^{t-1}(\mathbf{x}) }_{\text{(b)}} + \frac{7}{2} \eta N F^2 \nonumber\\
	&\quad + \eta K G^2  \underbrace{ \Big (\norm{\boldsymbol{\lambda}^t}^2+\norm{\boldsymbol{\lambda}^{t-1}}^2+\norm{\boldsymbol{\lambda}^{t-2}}^2\Big )	}_{\text{(c)}}.
\end{align}

We then sum from $t=1 \to T$ and upper-bound the (a)-(c) terms of \eqref{eq:lagranigan_function_inequality} using the following:
\begin{itemize}
\item (a) $\sum\limits_{t=1}^{T}( \norm{\mathbf{x} - \mathbf{x}^{t}}^{2} - \norm{\mathbf{x} - \mathbf{x}^{t+1}}^{2}) = \norm{\mathbf{x} - \mathbf{x}^1}^2 -\norm{\mathbf{x} - \mathbf{x}^{T+1}}^2 \le \norm{\mathbf{x}^1 - \mathbf{x}}^2 \le R^2$;
\item (b) $\sum\limits_{t=1}^{T}(Q^{t}(\mathbf{x}) - Q^{t-1}(\mathbf{x})) = Q^{T}(\mathbf{x}) - Q^{0}(\mathbf{x}) \le \norm{Q^{T}(\mathbf{x})} + \norm{Q^{0}(\mathbf{x})} \le \frac{2REG^2}{\eta \sigma} $, where in the last step we use $\norm{Q^{t}(\mathbf{x})} \le \left(DB+BS+S(S-1)\right)RG \norm{\boldsymbol{\lambda}^t}$ according to \eqref{eq:qfunc} and the update rule $\boldsymbol{\lambda}^t = \frac{ \mathbf{h}^{t}({\mathbf{x}}^t) }{ \eta \sigma }$;
\item (c) $ \sum\limits_{t=1}^{T} (\norm{\boldsymbol{\lambda}^t}^2 + \norm{\boldsymbol{\lambda}^{t-1}}^2 + \norm{\boldsymbol{\lambda}^{t-2}}^2) =
3 \sum\limits_{t=1}^{T} \norm{\boldsymbol{\lambda}^t}^2 - 2  \norm{\boldsymbol{\lambda}^T}^2 - \norm{\boldsymbol{\lambda}^{T-1}}^2$, assuming $\norm{\boldsymbol{\lambda}^0}=0$ and $\norm{\boldsymbol{\lambda}^{-1}}=0$.
\end{itemize}

Putting everything together we have:
\begin{equation}\label{eq:bound_lagrangian_function}
	\sum\limits_{t=1}^{T} \left(\mathcal{L}^t(\mathbf{x}^t, \boldsymbol{\lambda}^t) - \mathcal{L}^t(\mathbf{x}, \boldsymbol{\lambda}^t)\right) 
	\le
	\frac{R^2}{2 \eta} +  \frac{2REG^2}{\eta \sigma}  + 
	 \frac{7}{2}\eta N F^2 T + 3\eta K G^2  \sum\limits_{t=1}^{T} \norm{\boldsymbol{\lambda}^t}^2 \\
\end{equation}

\subsection{Static Regret}
If we set $x=x_{*}$ for which $\mathbf{h}^t(\mathbf{x}_*)=0$ and use $\boldsymbol{\lambda}^t = \frac{ \mathbf{h}^{t}({\mathbf{x}}^t) }{ \eta \sigma }$, we get
\begin{align}
	\sum\limits_{t=1}^{T}\Big(f^t(\mathbf{x}^t) + \frac{\norm{\mathbf{h}^t(\mathbf{x}^t)}^2}{\eta \sigma} - f^t(\mathbf{x}_*) \Big) 
	 \le 
	\frac{R^2}{2 \eta} + \frac{2REG^2}{\eta \sigma} + \frac{7}{2}\eta N F^2 T + 3K G^2  \sum\limits_{t=1}^{T} \frac{\norm{\mathbf{h}^t(\mathbf{x}^t)}^2}{\eta\sigma^2}, 
\end{align}

that yields
\begin{align}\label{eq:bound_lagrag_function_after_elimination_of_lambda_sharing_info}
	\sum\limits_{t=1}^{T}\Big( f^t(\mathbf{x}^t)-f^t(\mathbf{x}_*) \Big) + \frac{1}{\eta \sigma}\sum\limits_{t=1}^{T}\norm{\mathbf{h}^{t}({\mathbf{x}^t})}^2 (1-\frac{3KG^2}{\sigma})
	 \le \frac{R^2}{2 \eta} + \frac{2REG^2}{\eta \sigma} + \frac{7}{2}\eta N F^2 T.
\end{align}

For $\sigma > 3KG^2$, the upper bound $U_{sr}$ of the static regret  is found by 
\begin{equation}\label{eq:static_regret_bound}
	\sum\limits_{t=1}^{T}\Big(f^t(\mathbf{x}^t) - f^t(\mathbf{x}_*)\Big) \le \frac{R^2}{2 \eta} + \frac{2REG^2}{\eta \sigma}  + \frac{7}{2}\eta N F^2 T \triangleq U_{sr}.
\end{equation}

\subsection{Dynamic Regret}
Choosing again $\sigma > 3KG^2$ and plugging the instantaneous optimal solution $x=x^t_*$ in \eqref{eq:lagranigan_function_inequality}, we follow the procedure we used for the static regret and obtain:
\begin{equation}
	\sum\limits_{t=1}^{T}\Big(f^t(\mathbf{x}^t)-f^t(\mathbf{x}^t_*) \Big)  \le \frac{1}{2\eta}\sum\limits_{t=1}^{T}(  \norm{\mathbf{x}_*^t-\mathbf{x}^{t}}^{2} - \norm{\mathbf{x}_*^t-\mathbf{x}^{t+1}}^{2}) + \frac{2REG^2}{\eta \sigma} + \frac{7}{2}\eta N F^2 T.
\end{equation}

The only term that needs different handling is $\norm{\mathbf{x}_*^t - \mathbf{x}^{t}}^{2} - \norm{\mathbf{x}_*^t - \mathbf{x}^{t+1}}^{2}$, for which we can write
\begin{align}
	\norm{\mathbf{x}_*^t - \mathbf{x}^{t}}^{2} - \norm{\mathbf{x}_*^t - \mathbf{x}^{t+1}}^{2}  =   \underbrace{\norm{\mathbf{x}_*^t - \mathbf{x}^{t}}^{2} - \norm{\mathbf{x}_*^{t-1} -  \mathbf{x}^{t}}^{2}}_{\text{(a)}}
	 + 
	 \underbrace{\norm{\mathbf{x}_*^{t-1} - \mathbf{x}^{t}}^{2} - \norm{\mathbf{x}_*^t-\mathbf{x}^{t+1}}^{2}}_{\text{(b)}} .
\end{align}

We are interested in the telescopic sums of (a, b). In particular, for (a) we use $\mathbf{x}^2-\mathbf{y}^2=(\mathbf{x}-\mathbf{y})^{\top}(\mathbf{x}+\mathbf{y})$ and get
\begin{align}
\sum\limits_{t=1}^{T} \left(\norm{\mathbf{x}_*^t - \mathbf{x}^{t}}^{2} - \norm{\mathbf{x}_*^{t-1}-\mathbf{x}^{t}}^{2}\right) & = \sum\limits_{t=1}^{T} (\mathbf{x}_*^t - \mathbf{x}_*^{t-1})^{\top} (\mathbf{x}_*^{t} + \mathbf{x}_*^{t-1} - 2 \mathbf{x}^t) 
\le 
\sum\limits_{t=1}^{T} \norm{\mathbf{x}_*^{t} + \mathbf{x}_*^{t} - 2 \mathbf{x}^t} \norm{\mathbf{x}_*^t - \mathbf{x}_*^{t-1}} \nonumber\\
& \le
2R \sum\limits_{t=1}^{T} \norm{\mathbf{x}_*^t - \mathbf{x}_*^{t-1}} = 2RV(\mathbf{x}_*^{1:T}) 
\end{align}
where $V(\mathbf{x}_*^{1:T})$ is the sum of distances of consecutive optimal solutions. Moreover, for term (b) we have 
\begin{equation}
	 \sum\limits_{t=1}^{T}( \norm{\mathbf{x}_*^{t-1}-\mathbf{x}^{t}}^{2}- \norm{\mathbf{x}_*^t-\mathbf{x}^{t+1}}^{2})=\norm{\mathbf{x}_*^{0} - \mathbf{x}^{1}}^{2}- \norm{\mathbf{x}_*^T-\mathbf{x}^{T+1}}^{2} \le R^2.
\end{equation}

Finally, the upper bound $U_{dr}$ of the dynamic regret is 
\begin{equation}\label{eq:dynamic_regret_bound}
	\sum\limits_{t=1}^{T}\Big(f^t(\mathbf{x}^t) - f^t(\mathbf{x}^t_*) \Big) \le \frac{R}{\eta}   V(\mathbf{x}_*^{1:T}) + \frac{R^2}{2\eta} + \frac{2REG^2}{\eta \sigma} + \frac{7}{2}\eta N F^2 T \triangleq U_{dr}.
\end{equation}
By combining \eqref{eq:static_regret_bound},\eqref{eq:dynamic_regret_bound}, the relation to the bound of the static regret is $U_{dr} = U_{sr} + \frac{R}{\eta} V(\mathbf{x}_*^{1:T})$.

\subsection{Fit}
In order to bound the fit, we choose $\sigma > 3KG^2$ in \eqref{eq:bound_lagrag_function_after_elimination_of_lambda_sharing_info} and use $\sum\limits_{t=1}^{T}\left(f^t(\mathbf{x}_*)  -  f^t(\mathbf{x}^t)\right) \le 2NFT$, which gives 
\begin{equation}
	\sum\limits_{t=1}^{T}\norm{\mathbf{h}^{t}({\mathbf{x}^t})}^2  \le  \frac{1}{\beta} ( \frac{\sigma R^2}{2} + 2REG^2 + \frac{7}{2}\eta^2 \sigma N F^2 T + 2 \eta \sigma N F T), 
\end{equation}
where $\beta=1-\frac{3K G^2}{\sigma}$. By further using the property $\Big(\sum\limits_{i=1}^{n} a_i \Big)^2 \le n \sum\limits_{i=1}^{n} a_i^2$, we obtain
\begin{align}\label{eq:fit-bound}
	\Big( \sum\limits_{t=1}^{T}\norm{ \mathbf{h}^{t}(\mathbf{x}^t) } \Big)^2  \le  T  \sum\limits_{t=1}^{T}\norm{ \mathbf{h}^{t}(\mathbf{x}^t) }^2  
	& \le 
	 \frac{T}{\beta}  ( \frac{\sigma R^2}{2} + 2REG^2 + \frac{7}{2}\eta^2 \sigma N F^2 T + 2 \eta \sigma N F T) .
\end{align}

Taking the square root on both sides of \eqref{eq:fit-bound} 
\begin{align} 
	  \sum\limits_{t=1}^{T}\norm{ \mathbf{h}^{t}(\mathbf{x}^t) }  \le 
	 \Big( \frac{T}{\beta}  (  \frac{\sigma R^2}{2} + 2REG^2 + \frac{7}{2}\eta^2 \sigma N F^2 T + 2 \eta \sigma N F T ) \Big)^{1/2} 
\end{align}

Finally, $\sum_{m=1}^M h_m^t(\mathbf{x}^t) \le \sqrt{  M  \sum_{m=1}^M (h_m^t(\mathbf{x}^t))^2  } = \sqrt{M} \norm{\mathbf{h}^t(\mathbf{x}^t)}$  which gives us the upper bound $U_f$ on the fit
\begin{align}\label{eq:bound_fit_final}
	\sum\limits_{t=1}^{T}\sum\limits_{m=1}^{M}h_m^{t}(\mathbf{x}^t) \le \Big( \frac{MT}{\beta}  (  \frac{\sigma R^2}{2} + 2REG^2 + \frac{7}{2}\eta^2 \sigma N F^2 T + 2 \eta \sigma N F T ) \Big)^{1/2}.
\end{align}
By combining \eqref{eq:static_regret_bound},\eqref{eq:bound_fit_final}, the relation to the bound of the static regret is $U_{f} =  \sqrt{\frac{\eta \sigma }{\beta}MT(U_{sr} + 2 N F T )}$.


\section{Proof for implications of Assumption~\ref{ass:oco}} \label{app:implications}

Here we show formally two of the implications of Assumption~\ref{ass:oco}, namely (A) that $f(\mathbf{x})$ is bounded and (B) that clipped function $h(\mathbf{x})$ is convex. Below, for simplicity we drop the node $n$ and timestep $t$ sub and superscripts.

\subsection{Function $f(\mathbf{x})$ is bounded}

We want to show that for a finite $f$ for which Assumption~\ref{ass:oco} holds, there exists $F$ such that $|f| \le F$ for all $\mathbf{x}\in\Omega$.
From $\norm{\nabla_{\boldsymbol{x}} f} \le F^{\prime}$, it is implied that for any $\boldsymbol{x},\boldsymbol{y} \in \Omega$, we have
\begin{equation}
|f(\boldsymbol{x}) - f(\boldsymbol{y})| \le F^{\prime} \underbrace{\norm{\boldsymbol{x} - \boldsymbol{y}}}_{\le R} \le F^{\prime} R. \nonumber
\end{equation}

Furthermore, using that $|f(\boldsymbol{x})| - | f(\boldsymbol{y})| \le |f(\boldsymbol{x}) - f(\boldsymbol{y})| \le F^{\prime}R$, we get
\begin{equation}
|f(\boldsymbol{x})| \le F^{\prime} R + |f(\boldsymbol{y})| = F, \nonumber
\end{equation}
where we do not know $|f(\boldsymbol{y})|$, but we know that by definition it is finite.
Since Assumption~\ref{ass:oco} holds for $g$, the exact same steps can be followed in order to show that $g$ is bounded.

\subsection{Convexity of $h(\mathbf{x})$}

Recall the definitions of $h(\mathbf{x})$ and its gradient

\begin{align}
	h(\mathbf{x})=[g(\mathbf{x})]^+=
	\begin{cases}
		0 & \text{if $g(\mathbf{x}) \le 0$} \\
		g(\mathbf{x}) & \text{if $g(\mathbf{x}) > 0$}
	\end{cases}
	~~~\text{and}~~~
		\nabla h(\mathbf{x})=\nabla[g(\mathbf{x})]^+=
	\begin{cases}
		\mathbf{0} & \text{if $g(\mathbf{x}) \le 0$} \\
		\nabla g(\mathbf{x}) & \text{if $g(\mathbf{x}) > 0$}
	\end{cases} \nonumber
\end{align}

We want to show that if $g(\mathbf{x})$ is convex, that is, $g(\mathbf{x}) - g(\mathbf{y}) \le (\mathbf{x}-\mathbf{y})^{\top}\nabla g(\mathbf{x})$, then for any $\mathbf{x},\mathbf{y} \in\Omega$, we have 
\begin{equation}
h(\mathbf{x}) - h(\mathbf{y}) \le (\mathbf{x}-\mathbf{y})^{\top}\nabla h(\mathbf{x}).\nonumber
\end{equation}

In what follows, we only use the above definitions of $h$ and its gradient, and the fact that $g$ is convex. The four possible cases are the following.

\myitem{(a):} If $g(\mathbf{x}) > 0$ and $g(\mathbf{y}) > 0$, then
$
h(\mathbf{x}) - h(\mathbf{y}) = g(\mathbf{x})-g(\mathbf{y}) \le (\mathbf{x}-\mathbf{y})^{\top}\nabla g(\mathbf{x})  = (\mathbf{x}-\mathbf{y})^{\top}\nabla h(\mathbf{x}).
$

\myitem{(b):} If $g(\mathbf{x}) \le 0$ and $g(\mathbf{y}) \le 0$, then
$
h(\mathbf{x}) - h(\mathbf{y}) = 0 - 0 \le 0 = (\mathbf{x} - \mathbf{y})^{\top} \nabla h(\mathbf{x})$.

\myitem{(c):} If $g(\mathbf{x}) > 0$ and $g(\mathbf{y}) \le 0$, then
$
h(\mathbf{x}) - h(\mathbf{y}) = h(\mathbf{x}) = g(\mathbf{x}) \le g(\mathbf{x}) - g(\mathbf{y}) \le (\mathbf{x}-\mathbf{y})^{\top}\nabla g(\mathbf{x}) = (\mathbf{x} - \mathbf{y})^{\top}\nabla h(\mathbf{x}).
$

\myitem{(d):} If $g(\mathbf{x}) \le 0$ and $g(\mathbf{y}) > 0$, then
$
h(\mathbf{x})-h(\mathbf{y}) = -h(\mathbf{y}) = -g(\mathbf{y})\le 0 = (\mathbf{x} - \mathbf{y})^{\top} \nabla h(\mathbf{x}).
$

\end{document}